\newcommand\cyr
\renewcommand\rmdefault{wncyr}
\renewcommand\sfdefault{wncyss}
\renewcommand\encodingdefault{OT2}
\DeclareTextFontCommand{\textcyr}{\cyr}
\def\Eoborotnoye{\char"003}
\newtheorem{definition}{Definition}[section]
\newtheorem{theorem}{Theorem}[section]
\newtheorem{lemma}{Lemma}[section]
\newcommand{\nc}{\newcommand}
\nc{\C}{{\mathbb C}}
\nc{\R}{{\mathbb R}}
\nc{\HH}{{\mathbb H}}
\nc{\Z}{{\mathbb Z}}
\nc{\N}{{\mathbb N}}
\nc{\s}{{\mathbb S}}
\nc{\dd}{{\rm d}}
\nc{\ii}{{\bf i}}
\nc{\cg}{{\mathscr G}}
\nc{\ch}{{\mathscr H}}
\nc{\ci}{{\mathscr I}}
\nc{\cj}{{\mathscr J}}
\nc{\cv}{{\mathscr V}}
\nc{\fb}{{\mathfrak B}}
\nc{\ff}{{\mathfrak F}}
\nc{\fii}{{\mathfrak I}}
\nc{\fj}{{\mathfrak J}}
\nc{\fr}{{\mathfrak R}}
\nc{\fm}{{\mathfrak M}}
\nc{\sm}{{\mathsf M}{\mathsf a}{\mathsf n}^4}
\DeclareMathOperator{\re}{Re}
\DeclareMathOperator{\ed}{End}
\nc{\stack}[2]{{\begin{array}{c}
\scriptstyle #1 \\ \scriptstyle #2 \end{array}} }
\begin{document}

\title{An operator algebraic characterization of the Riemannian 
vacuum Einstein equation in four dimensions}

\author{G\'abor Etesi\\
\small{{\it Department of Algebra and Geometry, Institute of Mathematics,}}\\
\small{{\it Budapest University of Technology and Economics,}}\\
\small{{\it M\H uegyetem rkp. 3., H-1111 Budapest, Hungary}}
\footnote{E-mail: {\tt etesi@math.bme.hu, etesigabor@gmail.com}}}

\maketitle

\pagestyle{myheadings}
\markright{Operator algebraic characterization of the Einstein equation}

\thispagestyle{empty}

\begin{abstract} 
In this paper, using connected compact oriented smooth 
$4$-manifolds, some representations of the hyperfinite ${\rm II}_1$-type 
factor von Neumann algebra are constructed. The Murray--von Neumann 
coupling constant of these representations gives rise to a new smooth 
$4$-manifold invariant whose very first properties are investigated. 

Moreover as a part of this construction, a connected compact oriented smooth 
$4$-manifold admits an embedding into the hyperfinite ${\rm II}_1$ 
factor. This embedding, on the one hand, induces a Riemannian metric on 
the manifold such that its Riemannian curvature tensor belongs to the von 
Neumann algebra; on the other hand the 
metric induces a periodic dynamics on the von Neumann algebra, what we 
call the Hodge dynamics on the hyperfinite ${\rm II}_1$ factor. It is 
observed that the metric is Einstein {\it i.e.}, satisfies the (Riemannian) 
vacuum Einstein equation with a possibly non-zero cosmological constant, 
if and only if its Riemannian curvature tensor belongs to the 
fixed-point-subalgebra of the Hodge dynamics.

Finally, we make a comprehensive enumeration of all representations of 
the hyperfinite ${\rm II}_1$ factor constructed here, from the viewpoint 
of thermal equilibrium states and phase transitions in algebraic quantum 
field theory. 
\end{abstract}

\centerline{AMS Classification: Primary: 46L10, 83C45, Secondary: 57R55}
\centerline{Keywords: {\it Smooth $4$-manifold; Hyperfinite ${\rm II}_1$
factor; Einstein equation; Quantum gravity}}


\section{Introduction and summary}
\label{one}


The {\it hyperfinite factor von Neumann algebra of type ${\rm II}_1$} is 
distinguished among von Neumann algebras in many senses. Apparently this 
was von Neumann's favourite operator algebra and he was especially 
satisfied with its discovery. As it is known (cf. \cite[pp. 22-32]{neu} for 
a possible reconstruction of the story) he 
attempted, but finally did not complete or abandoned, to use 
the hyperfinite ${\rm II}_1$ factor to bring quantum 
mechanics to a not only mathematically but even conceptually sound 
basis, by interpreting quantum probabilities as relative frequencies of 
a particular statistical ensemble sorted from an absolute one, provided by 
the ${\rm II}_1$ hyperfinite factor 
with its unique normalized trace. The idea looked appealing not only by 
the uniqueness of the hyperfinite ${\rm II}_1$ factor among operator algebras, 
but also because of its unexpectedly rich representation theory among factors. 
Indeed, compared with others, the hyperfinite ${\rm II}_1$ factor has a 
proliferation of representations: the moduli space of its non-equivalent 
representations is isomorphic to $\R_+$ (and accordingly, the ${\rm II}_1$ 
factor is the only one among factors whose $K_0$ group is non-trivial, namely 
isomorphic to $\R$). However the existence of inequivalent 
representations, or in other words the failure of the {\it Stone--von 
Neumann representation theorem} in this case, is an indicator that the 
hyperfinite ${\rm II}_1$ factor is an operator algebra 
of a physical system possessing {\it infinitely many} degrees of 
freedom, like a (macroscopic) quantum statistical ensemble. As was mentioned 
by von Neumann, but with some amount of uncertainty, in {\it e.g.}  
\cite[Letters to P. Jordan, December 11, 1949 and January 12, 1950]{neu}, 
because of this property the hyperfinite ${\rm II}_1$ factor might even play 
a role in (relativistic) quantum field theory; although the recent 
conviction is that rather algebras of type III appear here, 
cf. \cite[Section V.6]{haa}.

The aim of this paper is to take two walks around representation theory 
of the hyperfinite ${\rm II}_1$ factor: one {\it mathematical} and one 
{\it physical} (of course these are not unrelated). The mathematical 
trip is a substantial extension of our previous efforts in 
\cite{ete3,ete4} and concerns the following problem: despite the 
existence of many non-trivial inequivalent representations of the 
hyperfinite ${\rm II}_1$ factor, only one of them appears as 
``reasonable'', namely its {\it standard representation}. Of course one 
can say that this is the most important while other representations, if 
cyclic, arise via the {\it Gelfand--Naimark--Segal 
construction} however using rather irrelevant states on the algebra, and the 
rest are direct sums of these; or all of them can be obtained uniformly by 
simply amplifying the standard representation. But one can also say that 
these general descriptions of representations are not too informative. Our 
first result towards constructing {\it contentful} new representations 
connects the general theory of the hyperfinite ${\rm II}_1$ factor with four 
dimensional differential geometry:

\begin{theorem} 
Let $M$ be a connected compact oriented smooth $4$-manifold.
Making use of its smooth structure only, out of $M$ a von Neumann algebra $\fr$
can be constructed which is geometric in the sense that it is generated by
geometric operators, including all complexified algebraic
({\it i.e.}, formal or stemming from a metric) curvature tensors of $M$. 
Moreover $\fr$ itself is a hyperfinite factor of type ${\rm II}_1$ hence is 
unique up to abstract isomorphisms of von Neumann algebras.

Furthermore $M$ admits an embedding $M\subset\fr$ via projections. Two 
$4$-manifolds $M,N$ with corresponding embeddings
have abstractly isomorphic von Neumann algebras however not canonically.
Nevertheless different abstract isomorphisms between these von Neumann
algebras induce orientation-preserving diffeomorphisms of $M$ and $N$
respectively {\it i.e.}, leave their embeddings unchanged. Hence up to
diffeomorphisms, all connected compact oriented smooth $4$-manifolds embed 
into a commonly given abstract von Neumann algebra $\fr$
which is the hyperfinite ${\rm II}_1$ factor.
\label{fotetel1}
\end{theorem}

\noindent The occurrence of the hyperfinite ${\rm II}_1$ factor in 
low dimensional differential topology is not only an immense source for 
new representations, but even brings a smooth $4$-manifold invariant to life:  

\begin{theorem}
Assuming $M$ to be as before, its von Neumann algebra $\fr$ 
admits a non-faithful representation on a certain complex separable Hilbert
space, such that the unitary equivalence class of this representation is
invariant under orientation-preserving diffeomorphisms of $M$. Consequently 
the Murray--von Neumann coupling constant of this representation
gives rise to a smooth $4$-manifold invariant $q$ taking values in the 
semi-open real interval $[0,1)$. Moreover there exists a subfactor 
$0\not=\fii(M)\subsetneqq\fr$ having index $[\fr\::\:\fii(M)]\in [4,+\infty)$ 
such that $q(M)=1-\frac{4}{[\fr\::\:\fii(M)]}$ and $q$ behaves like 
$q(M\#N)=q(M)+q(N)-q(M)q(N)$ under taking connected sum.
\label{fotetel2}
\end{theorem}

\noindent Some further very immediate properties of $q$ in the realm of 
four dimensional smoothness will be elaborated in Lemmata 
\ref{egyszeresen-osszefuggo}, \ref{tlemma}, \ref{poincare1} and 
\ref{poincare2} below, however let us make some general 
comments already here. The spectrum of the possible Jones indices looks like 
$\big\{4\cos^2\big(\frac{\pi}{n}\big)\:\big\vert\:n=3,4,\dots\big\}
\cup[4,+\infty]$ hence splits into a discrete and a continuous part. 
Subfactors with index in the discrete part 
$\{4\cos^2\big(\frac{\pi}{n}\big)\:\vert\:n\geqq 3\}$ have been 
completely classified \cite{pop} and in turn they follow an $ADE$ 
pattern \cite{ocn}. This classification scheme can be pushed
further in an affine form to cover the case when the index is exactly $4$ as 
well \cite{jon-mor-sny}, however in general the set of subfactors 
with index belonging to the continuous portion $[4,+\infty)$ is very wild 
and only partial results are known mainly for the subinterval 
$[4,5]\subset [4,+\infty)$ or a bit more (cf. {\it e.g.} \cite{jon-mor-sny} 
for an excellent survey and recent results while for some further extension 
cf. \cite{afz-mor-pen}). Concerning the impact of this division on $q$, 
Theorem \ref{fotetel2} says that for a compact smooth $4$-manifold $M$ its 
invariant always satisfies $\frac{4}{1-q(M)}\in [4,+\infty)$ {\it i.e.}, the 
corresponding Jones index already belongs to the continuous range. This 
observation is a hint 
that smooth $4$-manifolds might provide a rich reservoir of subfactors in the 
wild {i.e.}, continuous index range. 

Next turning towards physics: a 
longstanding problem of contemporary theoretical physics is how to unify 
the obviously successful and mathematically consistent {\it theory of 
general relativity} with the obviously successful but yet mathematically 
problematic {\it relativistic quantum field theory}. It has been 
generally believed that these two fundamental pillars of modern 
theoretical physics are in a clash not only because of the different 
mathematical tools they use but are in tension even at a deep 
conceptual level: for instance classical notions of general relativity such 
as a space-time event, the light cone or the event horizon of a black hole 
are ``too sharp'' objects and the theory itself is ``too non-linear'' 
from a quantum theoretic viewpoint; whereas relativistic quantum field 
theory is not background independent from the aspect of general relativity. 

The demand by general relativity summarized as the {\it principle of
general covariance} is perhaps one of the two main obstacles why general
relativity has remained outside of the mainstream classical and quantum
field theoretic expansion in the $20^{\rm th}$ century. Indeed, an
implementation of this inherent principle of general relativity forces that a
robust group, namely the full diffeomorphism group of the underlying space-time
manifold must belong to the symmetry group of a field theory
compatible with general relativity. However an unwanted consequence of
the vast diffeomorphism symmetry is that it even allows one to transform
time itself away from the theory (known as the ``problem of time'' in
general relativity, see {\it e.g.} \cite[Appendix E]{wal} for a technical
presentation as well as {\it e.g.} \cite[Chapter 2]{cal} and
\cite[Subsection 2.1]{hed2} for a broader philosophical survey on this
problem) making it problematic to apply usual canonical quantization 
methods---based on Hamiltonian
formulation hence on an essential explicit reference to an ''auxiliary
time''---in case of general relativity. The other reason is the as well
in-built core idea, the {\it equivalence principle} which renders
general relativity a strongly self-interacting classical field theory in
the sense that precisely in four dimensions the ``free'' and the
``interaction-with-itself'' modes of the gravitational field have
energetically the same magnitudes, obfuscating perturbative considerations.
In fact the equivalence principle says that there is no way to make a
physical distinction between these two modes of gravity. Heisenberg and
Pauli were still optimistic concerning canonical and perturbative
quantization of gravity with respect to a fixed time or, more generally, a 
reference or ambient space-time in their 1929 paper \cite{hei-pau}; however
these initial hopes quickly evaporated already in the 1930's by
recognizing the {\it essential impossibility} of quantizing general
relativity via canonical quantization and exhibit it as a perturbatively
renormalizable quantum field theory in a coherent way. This was clearly
observed by Bronstein \cite{bro} first; as he wrote in his 1936
paper: ``[...] the elimination of the logical inconsistencies [requires]
rejection of our ordinary concepts of space and time, modifying them by some
much deeper and nonevident concepts.'' (also cited by Smolin \cite[p. 85]{smo}).

Roughly the thinking about gravity has split into two main branches
since the 1950-60's \cite{hed2}. The first older and more accepted
direction postulates that gravity should be quantized akin to other
fundamental forces but with more advanced methods including
(super)string theoretic \cite{hed1}, Feynman integral, loop quantum gravity
or some further techniques---or at least one should construct it as a low
energy effective field theory of an unknown high energy theory; the other newer
and yet less-accepted attitude declares that gravity is an emergent
macroscopic phenomenon in the sense that it always involves a huge amount
of physical degrees-of-freedom (beyond the obvious astronomical
evidences, also supported by various theoretical discoveries during the
1970-80's such as Hawking's area theorem, black hole radiation, all
resembling thermodynamics) hence is not subject to quantization at all.
Nevertheless, as a matter of fact in the 2020's, we have to admit that
an overall accepted quantum theory of gravity does not exist yet and
even general relativity as a classical field theory persists to keep
its conceptually isolated position within current theoretical physics
\cite{hed2}. Perhaps it is worth mentioning here that general
relativity receives further challenges from low dimensional differential
topology too by recent discoveries which were unforeseeable earlier,
cf. {\it e.g.} \cite[Section 1]{ete1,ete2} for a brief summary.

Strongly motivated by these well-known general incompatibility comments, 
in the aforementioned second {\it i.e.}, physical trip around the hyperfinite 
${\rm II}_1$ factor, an operator algebraic characterization of the vacuum 
Einstein equation is obtained, which can be summarized as follows:

\begin{theorem} 
Let $M$ be a connected compact oriented smooth $4$-manifold and 
consider its embedding $M\subset\fr$ as in Theorem \ref{fotetel1}. This 
embedding induces a Riemannian structure $(M,g)$ whose Riemannian 
curvature satisfies $R_g\in\fr$. Moreover if $*$ denotes the 
Hodge star operating on $\Omega^2(M;\C)$ then $*\in\fr$. It is 
self-adjoint and satisfies $*^2=1$ hence is unitary thus generates a 
periodic inner $\divideontimes$-automorphism of $\fr$ 
rendering it a so-called {\em Hodge dynamical system} $\big(\fr,\big\{{\rm 
Ad}_{*^t}\}_{t\in\R}\big)$. Finally, both $M\subset\fr$ and the subfactor 
$\fii(M)\subset\fr$ from Theorem \ref{fotetel2} are 
preserved by the Hodge dynamics, more precisely they form parts of its 
fixed-point-subalgebra $\fii(M,g)\subseteqq\fr$, and $(M,g)$ is Einstein if 
and only if $R_g$ belongs to this fixed-point-subalgebra too. 
\label{fotetel3} 
\end{theorem}

\noindent This result can be regarded as a sort of ``linearization via 
complex numbers'' of the highly non-linear and inherently real Einstein 
equation.
 
The paper is organized as follows. Section \ref{two} contains detailed 
proofs of Theorems \ref{fotetel1} and \ref{fotetel2} and some further 
results concerning smooth $4$-manifolds in Lemmata 
\ref{egyszeresen-osszefuggo}, \ref{tlemma}, \ref{poincare1} and 
\ref{poincare2}. Section \ref{three} is devoted to the proof of 
Theorem \ref{fotetel3} as well as placing representation theory of the 
hyperfinite ${\rm II}_1$ factor into the context of algebraic quantum field 
theory \cite{bra-rob,haa}.
\vspace{0.1in}

\noindent{\bf Declarations and statements}. All the not-referenced results 
in this work are fully the author's own contribution. There are no conflict of
interest to declare that are relevant to the content of this article.
The work meets all ethical standards applicable here. No funds, grants,
or other financial supports were received. Data sharing is not
applicable to this article as no datasets were generated or analyzed
during the underlying study.


\section{Emergence of the hyperfinite ${\rm II}_1$ factor} 
\label{two}


Since our earlier expositions \cite{ete3,ete4} are not 
widely-known as well as are yet technically unsatisfactory at some steps, 
in this section for the Reader's convenience we recall again the material in 
\cite{ete4}, however in a substantially modified and 
extended form, in order to prove Theorems \ref{fotetel1} and 
\ref{fotetel2}. First we shall exhibit a simple self-contained two-step 
construction of a von Neumann algebra attached to any compact oriented smooth 
$4$-manifold. Then the structure of this algebra will be explored in 
some detail. Finally we exhibit a new ({\it i.e.}, not the standard) 
representation of this von Neumann algebra induced by the whole 
procedure leading to a new smooth $4$-manifold invariant whose first 
properties are also examined. 

{\it Construction of an algebra.} Take the isomorphism class of a connected 
compact oriented smooth $4$-manifold (without boundary) and from now on let 
$M$ be a once and for all fixed representative in it carrying the action of 
its own group of orientation-preserving diffeomorphisms ${\rm Diff}^+(M)$. 
Among all tensor bundles $T^{(p,q)}M$ over $M$ the $2^{\rm nd}$ exterior power 
$\wedge^2T^*M\subset T^{(0,2)}M$ is the only one which can be endowed with a 
pairing in a natural way {\it i.e.}, a pairing extracted from the smooth 
structure (and the orientation) of $M$ alone. Indeed, consider its associated 
vector space $\Omega^2(M):=C^\infty(M;\wedge^2T^*M)$ of smooth $2$-forms on 
$M$. Define an $L^2$-pairing $\langle\:\cdot\:, 
\:\cdot\:\rangle_{L^2(M)}:\Omega^2(M)\times \Omega^2(M) \rightarrow\R$ 
via integration: 
\begin{equation}
\langle\varphi,\psi\rangle_{L^2(M)}:=
\int\limits_M\varphi\wedge\psi
\label{integralas}
\end{equation}
and observe that this pairing is symmetric, non-degenerate however 
{\it indefinite} in general (thus can be regarded as an indefinite 
$L^2$-scalar product on $\Omega^2(M)$ if one wishes). Nevertheless, taking 
the complexified vector space $\Omega^2(M;\C):=
C^\infty(M;\wedge^2T^*M\otimes_\R\C)$ and the $\C$-bilinear extension of 
(\ref{integralas}), the assignment 
$\varphi\mapsto\langle\cdot\:,\varphi\rangle^\C_{L^2(M)}$ gives rise to a 
canonical inclusion $\Omega^2(M;\C)\subset(\Omega^2(M;\C))^*$ into the bare 
linear algebraic dual space. Consider the unital algebra of all bare 
$\C$-linear endomorphisms $\ed(\Omega^2(M;\C))$ of $\Omega^2(M;\C)$ and let 
$C(M)\subset \ed(\Omega^2(M;\C))$ denote the associated canonical 
complex unital subalgebra induced by the embedding 
$\Omega^2(M;\C)\subset(\Omega^2(M;\C))^*$ {\it i.e.}, $C(M)$ is by definition 
spanned by the unit $1\in\ed(\Omega^2(M;\C))$ and 
$\Omega^2(M;\C)\otimes_\C\Omega^2(M;\C)$ through the chain of $\C$-linear 
inclusions $\Omega^2(M;\C)\otimes_\C\Omega^2(M;\C)\subset
\Omega^2(M;\C)\otimes_\C(\Omega^2(M;\C))^*\subset\ed(\Omega^2(M;\C))$. 
It follows that $C(M)$ is generated by operators like 
$\omega\mapsto c\omega+\sum\limits_\ell\langle\omega,\varphi_\ell
\rangle_{L^2(M)}^\C\psi_\ell$ permitting here finite sums only. For clarity 
note that being (\ref{integralas}) a non-local operation, $C(M)$ is 
a genuine global infinite dimensional object, associated to the oriented 
smooth manifold $M$.

First we want to dissolve $C(M)$ into an infinite tensor product of matrix 
algebras. Start with an open covering of $M$ and by compactness 
select a finite open subcovering $\{U_i\}_{i=1,\dots,m}$ of it; we suppose 
that all $U_i$'s are already homeomorphic with open $4$-balls. Take a 
partition of unity $\{\rho_i\}_{i=1,\dots,m}$ subordinate to the covering 
({\it i.e.}, a collection of non-negative smooth functions satisfying 
$\sum\limits_{i=1}^m\rho_i=1$ such that the support of $\rho_i$ is contained in 
$U_i$). Put $\Omega^2_{\rho_i,0}(U_i):=\big\{\rho_i\psi_i\:\vert\:
\mbox{$\psi_i\in\Omega^2(U_i;\C)$ such that $\rho_i\psi_i
\vert_{\partial\overline{U}_i}=0$}\big\}$ 
and writing $U$ and $\rho$ for any of the $U_i$'s and their corresponding
$\rho_i$'s, let $C(U)$ be the unital algebra generated by the local unit 
$1\in\ed\big(\Omega^2_{\rho,0}(U)\big)$ and 
$\Omega^2_{\rho,0}(U)\otimes_\C\Omega^2_{\rho,0}(U)$ as for $C(M)$ and 
consider $\fm_m\big(C(U)\big)$, the algebra of $m\times m$ matrices over 
$C(U)$. Picking $\rho_i\varphi_i\in\Omega^2_{\rho_i,0}(U_i)$ and setting 
$\varphi:=\sum\limits_i\rho_i\varphi_i\in\Omega^2(M;\C)$ {\it etc.}, 
the equality $\sum\limits_{i,j=1}^m\big(c\delta_{ij}1+
\langle\:\cdot\:,\rho_i\varphi_i\rangle^\C_{L^2(M)}\rho_j\psi_j\big)=c1+
\langle\:\cdot\:,\varphi\rangle^\C_{L^2(M)}\psi$ induces a homomorphism from 
$\fm_m(C(U))$ onto $C(M)$. This homomorphism is surjective however not 
injective: both follow from the decomposition 
$\varphi=\sum\limits_i(\rho_i\varphi)\mapsto\bigoplus
\limits_i(\rho_i\varphi)$ which is not 
unique due to overlappings in the open covering {\it i.e.}, because the open 
subset of intersections
\begin{equation}
\bigcup\limits_{i,j=1}^mU_i\cap U_j\subset M
\label{metszet}
\end{equation} 
is not empty. Consequently this surjective homomorphism has 
a non-trivial kernel $J(U)\subset\fm_m(C(U))$ yielding an isomorphism 
$C(M)\cong\fm_m(C(U))/J(U)$. Proceeding further, let us say that the open 
covering $\{V_k\}_{k=1,\dots,mn}$ of $M$ is an {\it $n$-homogeneous 
refinement} of $\{U_i\}_{i=1,\dots,m}$ if for every $1\leqq k\leqq mn$ there 
is an $1\leqq i\leqq m$ satisfying $V_k\subset U_i$ and every $U_i$ contains 
precisely $n$ number of $V_k$'s which are again (smaller) open $4$-balls. Then 
repeating the previous decomposition procedure we obtain further that 
$C(M)\cong\fm_{mn}(C(V))/J(V)=\fm_m\big(\fm_n\big(C(V)\big)\big)/J(V)$ 
where $J(V)\subset\fm_{mn}(C(V))$ is the ideal generated by the 
overlappings in the refined covering $\{V_k\}_{k=1,\dots,mn}$ as before. 
By its definition as a topological space, $M$ admits a countable 
basis and any countable sequence of nested open subsets 
$M\supset U\supset V\supset\dots$ possesses 
the Cantor property {\it i.e.}, has a non-empty intersection; we can 
suppose without loss of generality that it contains precisely one 
point $x\in M$ and note that every point of $M$ arises in this manner. 
Therefore the induced sequence $C(M)\supset C(U)\supset C(V)\supset\dots$ of 
nested local unital algebras terminates at the infinitesimal algebra 
$C(x)=\ed(\wedge^2T_x^*M\otimes_\R\C)\cong\fm_6(\C)$ since 
$\dim_\C(\wedge^2T_x^*M\otimes_\R\C)=\binom{4}{2}=6$. The 
corresponding sequence of ideals $0=J(M),J(U),J(V),\dots$ likewise 
terminates at $J(x)$. Geometrically speaking, since the points of $M$ 
are already disjoint hence the open subset (\ref{metszet}) of overlappings 
tends to the empty set during refinements, we get $J(x)=0$. Algebraically, 
since the ideals of $\fm_k(R)$ are in one one-to-one 
correspondence with the ideals of $R$ we can identify the ideals above with 
those within $C(M), C(U),C(V),\dots$ respectively. However 
since $C(x)\cong\fm_6(\C)$ has no non-trivial ideals, we find $J(x)=0$. 
Consequently the $\omega$-completed procedure of taking step-by-step 
homogeneous refinements of a finite open covering of $M$ formally induces a 
countable infinite decomposition $C(M)\cong\fm_m\big(\fm_n(\dots(C(x))
\dots)\big)/J(x)\cong\fm_m\big(\fm_n(\dots(\fm_6(\C))
\dots)\big)$. This procedure can be captured precisely by the concept of an 
infinite tensor product: $C(M)$ arises as the inductive limit of matrix 
algebras $C_n(M)=\fm_{k_1}(\C)\otimes_\C\dots\otimes_\C\fm_{k_n}(\C)$ where the 
embedding $C_n(M)\subset C_{n+1}(M)$ is by means of the mapping 
$A\mapsto A\otimes 1$ where $1\in\fm_{k_{n+1}}(\C)$ is the unit. 
To summarize: there exists a decomposition 
\begin{equation}
C(M)\cong\fm_{k_1}(\C)\otimes_\C\fm_{k_2}(\C) \otimes_\C\dots
\label{clifford}
\end{equation} 
yet depending on a particular step-by-step homogeneous 
refinement procedure of a particular finite covering of $M$; nevertheless 
by its construction $C(M)$ contains the infinitesimal hence finite dimensional 
algebras $\ed(\wedge^2T_x^*M\otimes_\R\C)$ at 
every $x\in M$ as well as forms part of the global algebra 
$\ed(\Omega^2(M;\C))$ yielding canonical inclusions 
\begin{equation} 
C^\infty(M;\ed(\wedge^2T^*M\otimes_\R\C))\subset C(M)\subset
\ed(\Omega^2(M;\C))=\ed\big(C^\infty(M;\wedge^2T^*M\otimes_\R\C)\big)
\label{izomorfizmus}
\end{equation} 
of $\C$-algebras by construction. 

Next we carry out completion. We see via (\ref{clifford}) that $C(M)$ is a 
complex $\divideontimes$-algebra whose $\divideontimes$-operation (provided by 
taking Hermitian matrix transpose, a non-local operation) is written as 
$A\mapsto A^\divideontimes$. The isomorphism (\ref{clifford}) also shows 
that if $A\in C(M)$ then one can pick the smallest $n\in\N$ such that 
$A\in C_n(M)$, consequently $A$ has a finite trace defined by 
$\tau (A):=\frac{1}{k_1k_2\dots k_n}{\rm Trace}(A)$ {\it i.e.}, taking the 
usual normalized trace of the corresponding $(k_1\dots k_n)\times 
(k_1\dots k_n)$ complex matrix. Observe that $\tau (A)\in\C$ does not depend on 
$n$. (We will see shortly that diffeomorphisms act by inner automorphisms on 
$C(M)$ hence by \cite[Corollary 2]{sto1} also $\tau: C(M)\rightarrow\C$ is 
the only state which is ${\rm Diff}^+(M)$-invariant {\it i.e.}, natural to use 
in our situation.) Following \cite[Section 1.6]{ana-pop} we then define a 
sesquilinear inner product on 
$C(M)$ by $(A,B):=\tau (AB^\divideontimes)$ which is non-negative and 
non-degenerate thus the completion of $C(M)$ with respect to the norm 
$\Vert\:\cdot\:\Vert$ induced by $(\:\cdot\:,\:\cdot\:)$ renders $C(M)$ a 
complex Hilbert space what we shall write as $\ch$ and its Banach algebra of 
all bounded linear operators as $\fb(\ch)$. Multiplication in $C(M)$ from 
the left on itself is continuous hence gives rise to a representation 
$\pi:C(M)\rightarrow\fb(\ch)$. Finally our key object effortlessly 
emerges as the weak closure of the image of $C(M)$ under $\pi$ within 
$\fb (\ch)$ or equivalently, by referring to von Neumann's bicommutant 
theorem \cite[Theorem 2.1.3]{ana-pop} we put 
\[\fr:=(\pi(C(M)))''\subset\fb(\ch)\:\:.\] 
This von Neumann algebra of course admits a unit $1\in\fr$ moreover 
continues to have trivial center {\it i.e.}, is a factor. Moreover by construction 
it is hyperfinite. The trace $\tau$ as defined extends from $C(M)$ to 
$\fr$ and satisfies $\tau (1)=1$. Moreover \cite[Proposition 4.1.4]{ana-pop} 
this trace is unique on $\fr$. Consequently $\fr$ is a 
hyperfinite ${\rm II}_1$ factor hence is unique up to abstract 
isomorphisms \cite[Theorem 11.2.2]{ana-pop}. In particular 
in this sense its construction is independent of the involved 
covering-refinement procedure over $M$. Going further, we also obtain by 
extension a representation 
$\pi:\fr\rightarrow\fb(\ch)$. Observe that here we have constructed 
$\ch$ as a completion of $(C(M),\tau)$ however the same $\ch$ arises if 
taking the completion of $(\fr,\tau)$. Hence the two kinds of 
completions $\fr$ and $\ch$ of one and the same object $C(M)$ in fact 
form an increasing chain $C(M)\subset\fr\subset\ch$ as complex 
(complete) vector spaces. Given $A,B\in\fr$ we 
shall write $A\in\fr$ but $\hat{B}\in\ch$ from now on as usual. This is 
necessary since $\fr$ and $\ch$ are very different for example as ${\rm 
U}(\ch)$-modules: given a unitary operator $V\in {\rm U}(\ch)$ then 
$A\in\fr$ is acted upon as $A\mapsto VAV^{-1}$ but $\hat{B}\in\ch$ 
transforms as $\hat{B}\mapsto V\hat{B}$. Using this notation the trace 
always can be written as a scalar product with the image of the unit in 
$\ch$ that is, for every $A\in\fr$ we have 
\[\tau(A)=(\hat{A},\hat{1})\] 
exhibiting a general abstract expression for the trace.

{\it Exploring the algebra $\fr$.} Before proceeding further let us make 
a digression here to gain a better picture. This is desirable because 
taking the weak closure like $\fr$ of some explicitly known structure like 
$C(M)$, often involves a sort of loosing control over the latter. 
Nevertheless we already know promisingly that $\fr$ is a hyperfinite 
factor von Neumann algebra of ${\rm II}_1$ type. Let us now reveal some of its 
elements.
 
1. We immediately recognize via the left hand side of (\ref{izomorfizmus}) 
combined with the canonical inclusion $C(M)\subset\fr$ that 
$C^\infty (M;\ed(\wedge^2T^*M\otimes_\R\C))\subset\fr$ 
{\it i.e.}, the bundle or in other words pointwisely defined endomorphisms of 
$2$-forms form a part of $\fr$. As an introductory simple observation note that 
in case of such elements all operations in $\fr$ stem 
from the corresponding pointwise operations: if $a,b\in\C$ and 
$A,B\in C^\infty (M;\ed(\wedge^2T^*M\otimes_\R\C))$ therefore 
$A_x,A_x^*,B_x\in\ed(\wedge^2T_x^*M\otimes_\R\C)$ for every $x\in M$ then 
readily $(aA+bB)_x=aA_x+bB_x$, $(AB)_x=A_xB_x$ and $A^\divideontimes_x=A^*_x$. 
Operators of this kind are important because they 
allow to make a contact with local {\it four dimensional} differential 
geometry.\footnote{In fact all the constructions so far work for an arbitrary 
compact connected oriented and smooth $4k$-manifold with $k=0,1,2,\dots$ 
(note that in $4k+2$ dimensions the pairing (\ref{integralas}) 
gives rise to a symplectic structure on $2k+1$-forms).\label{4k}} 
A peculiarity of four dimensions is that the space 
$C^\infty(M;\ed(\wedge^2T^*M\otimes_\R\C ))$ contains 
curvature tensors (more precisely their complex linear extensions) 
on $M$. If $(M,g)$ is an oriented Riemannian $4$-manifold 
then its Riemannian curvature tensor $R_g$ is indeed a member of this 
subalgebra: with respect to the splitting of complexified $2$-forms into 
their (anti)self-dual parts its complex linear extension looks like 
(cf. \cite{sin-tho})
\begin{equation}
R_g=\left(\begin{matrix}
                \frac{1}{12}{\rm Scal}+{\rm Weyl}^+ & {\rm Ric}_0\\
                  {\rm Ric}_0^* & \frac{1}{12}{\rm Scal}+{\rm Weyl}^-
                      \end{matrix}\right)
\:\:\:\:\::\:\:\:\:\:\begin{matrix}\Omega^+(M;\C)\\
                                                  \bigoplus\\
                                                 \Omega^-(M;\C)
                                            \end{matrix}
         \:\:\:\longrightarrow\:\:\:\begin{matrix}\Omega^+(M;\C)\\
                                                  \bigoplus\\
                                                 \Omega^-(M;\C)
                                            \end{matrix}
\label{gorbulet}
\end{equation}
hence is a self-adjoint operator $R^\divideontimes_g=R_g$. 
More generally, $C^\infty(M;\ed(\wedge^2T^*M\otimes_\R\C ))$ hence $\fr$ 
contains the complex linear extensions of all algebraic ({\it i.e.}, 
formal only, not stemming from a metric) curvature tensors $R$ over $M$. 

Whenever $A\in C^\infty (M;\ed (\wedge^2T^*M\otimes_\R\C))$ one can 
work out an explicit formula for its trace because one can compare 
the global trace $\tau(A)$ and the local trace function 
$x\mapsto {\rm tr}(A_x)$ given by the pointwise traces of the local operators 
$A_x:\wedge^2T^*_xM\otimes_\R\C\rightarrow\wedge^2T^*_xM\otimes_\R\C$ at every 
$x\in M$. Recall that $\fr$ has been constructed as the weak closure of 
the algebra (\ref{clifford}). In fact \cite[Section 1.1.6]{ana-pop} the 
universality of $\fr$ permits to obtain it from any particular sequence of 
matrix algebras, like {\it e.g.} $\fm_6(\C)\otimes_\C\fm_6(\C)\otimes_\C...$ 
whose weak closure therefore is again $\fr$. Choose an arbitrary Riemannian 
metric $g$ on the compact oriented $M$, extend it sesquilinearly and 
take the corresponding definite $L^2$-scalar product on $\Omega^2(M;\C)$. 
That is, put 
\[(\varphi,\psi)^\C_{L^2(M,g)}:=\int\limits_M\varphi\wedge 
*_g\overline{\psi}=\int\limits_{x\in M}g_x(\varphi_x,\psi_x)^\C\mu_g(x)\] 
as usual, where $\mu_g\in\Omega^4(M)$ is the induced volume form on $(M,g)$. 
Pick an orthonormal frame $\{\varphi_1,\varphi_2,\dots\}$ in $\Omega^2(M;\C)$. 
By the aid of these tools the trace of $A$ takes the shape
\[\tau (A)=\lim\limits_{n\rightarrow+\infty}\frac{1}{6^n}
\sum\limits_{i=1}^{6^n} (A\varphi_i\:,\:\varphi_i)^\C_{L^2(M,g)}\:\:.\] 
Fix $n\in\N$, write 
$M_n:=\bigcap\limits_{i=1}^{6^n} {\rm supp}\varphi_i\subseteqq M$ and 
take a point $x\in M_n$. Since 
$\dim_\C(\wedge^2T^*_xM\otimes_\R\C)=\binom{4}{2}=6$ the maximal number of 
completely disjoint linearly independent sub-$6$-tuples in 
$\{\varphi_{1,x},\varphi_{2,x},\dots,\varphi_{6^n,x}\}$ is equal to 
$\frac{6^n}{6}=6^{n-1}$. Moreover it follows from Sard's lemma that with 
a generic smooth choice for $\{\varphi_1,\varphi_2,\dots\}$ the subset 
of those points $y\in M_n$ where this number is less than $6^{n-1}$ has 
measure zero in $M_n$ with respect to the measure $\mu_g$. Consequently 
\[\sum\limits_{i=1}^{6^n}(A\varphi_i,\varphi_i)^\C_{L^2(M_n,g)}= 
\int\limits_{x\in M_n}\sum\limits_{i=1}^{6^n}g_x(A_x\varphi_{i,x}, 
\varphi_{i,x})^\C\:\mu_g(x)=6^{n-1}\int\limits_{x\in M_n}{\rm tr} 
(A_x)\mu_g(x)\:\:.\] 
Since $\{\varphi_1,\varphi_2,\dots\}$ is a basis in $\Omega^2(M;\C)$ 
therefore $M\setminus\bigcup\limits_{n=0}^{+\infty}M_n=
\bigcap\limits_{n=0}^{+\infty}(M\setminus M_n)$ has measure zero as 
well we can let $n\rightarrow+\infty$ to end up with 
\[\tau (A)=\frac{1}{6}\int\limits_M{\rm tr}(A)\mu_g\:\:.\] 
Notice by inserting (\ref{gorbulet}) that 
$\tau(R_g)=\frac{1}{12}\int_M{\rm Scal}\:\mu_g$ hence the operator algebraic 
trace on the associated von Neumann algebra 
is nothing else than the generalization of the total scalar 
curvature in Riemannian geometry.

2. Next we assert that the group of invertible elements within 
$C(M)$ can be canonically identified with 
${\rm Iso}(\wedge^2T^*M\otimes_\R\C)$, the group of bundle isomorphisms 
{\it i.e.}, pairs $(\Phi, F)$ consisting of an orientation-preserving 
diffeomorphism $\Phi$ of the base $M$ and a fiberwise $\C$-linear 
diffeomorphism $F$ of the total space $\wedge^2T^*M\otimes_\R\C$ such that
\[\xymatrix{\ar[d]_\pi\wedge^2T^*M\otimes_\R\C\ar[r]^{F} &
             \wedge^2T^*M\otimes_\R\C\ar[d]_\pi \\
          M\ar[r]^{\Phi}  & M}\]
is commutative. On the one hand by the aid of the right hand side inclusion in
(\ref{izomorfizmus}) invertible elements in $C(M)$ belong to 
$\ed(\Omega^2(M;\C))$; on the other hand it is clear that smooth bundle 
isomorphisms act on the space of smooth sections in an obvious $\C$-linear way 
hence ${\rm Iso}(\wedge^2T^*M\otimes_\R\C)\subset\ed(\Omega^2(M;\C))$ too.
Consequently our assertion at least makes sense. Moreover intuitively 
it also follows at once, for the very geometric content of 
the decomposition (\ref{clifford}) in its $\omega$-completed form is 
that elements of $C(M)$, while operating on smooth sections, 
preserve their infinitesimal structure {\it i.e.}, the individual fibers of 
$\wedge^2T^*M\otimes_\R\C$ too, hence the invertible ones give rise to 
bundle morphisms. 

More precisely, by recalling again the construction leading to the 
decomposition (\ref{clifford}), take a finite open covering 
$\{U_i\}_{i=1,\dots,m}$ of $M$, pick an invertible operator 
$A\in C(M)=\fm_m(C(U))/J(U)$ and consider 
any of its preimage $A_m\in\fm_m(C(U))$. This is an $m\times m$ invertible 
matrix consequently admits a unique ``$LU$-decomposition'' {\it i.e.}, 
$A_m=T'_m\Pi_mT''_m$ such that $T'_m,T''_m\in\fm_m(C(U))$ are upper 
triangular matrices moreover $\Pi_m\in\fm_m(C(U))$ is a permutation matrix 
(this is meaningful here because $0,1\in C(U)$ always); let $\pi_m$ be the 
corresponding permutation of the index set $\{1,2,\dots,m\}$. In this way, 
on the one hand, $A_m$ induces an approximate diffeomorphism of 
$\{U_i\}_{i=1,\dots,m}$ by putting $\Phi_m(U_i):=U_{\pi_m(i)}$ for every 
$i=1,\dots,m$. On the other hand, $A_m$ gives rise to an approximate element 
$F_m$ (by considering its action on very localized $2$-forms), which is 
compatible with $\Phi_m$. These together comprise an approximating bundle 
isomorphism $(\Phi_m,F_m)$. Then taking a countable sequence of step-by-step 
homogeneous refinements as before we obtain a bundle isomorphism $(\Phi,F)$ 
assigned to $A$. Conversely, given a pair $(\Psi,G)$, on the one hand take an 
open covering $\{V_j\}_{j=1,\dots,n}$ which is invariant under $\Psi$. On 
the other hand viewing $G$ as an invertible $\C$-linear transformation of 
$\Omega^2(M;\C)$ consider the matrix $B_n\in\fm_n(C(V))$ built up from the 
action of $G$ using localized $2$-forms belonging to 
$\Omega^2_{\rho_j,0}(V_j)$ for every $j=1,\dots,n$ that is, define 
$(B_n)_{ij}$ by solving 
$G=c\delta_{ij}1+\langle\:\cdot\:,\rho_i\psi_i\rangle^\C_{L^2(M)}\rho_j\psi_j$ 
with suitable local $2$-forms, and take the image of this matrix in 
$C(M)=\fm_n(C(V))/J(V)$. Then carrying out 
refinements as usual we obtain an invertible operator $B\in C(M)$. 

Note that the identification of invertible 
elements of $C(M)$ with bundle isomorphisms has already been partly 
recognized in the previous item, for the left hand side embedding of 
(\ref{izomorfizmus}) implies that the gauge group 
$\cg(\wedge^2T^*M\otimes_\R\C)=C^\infty (M;{\rm Aut}(\wedge^2T^*M
\otimes_\R\C))$ of $\wedge^2T^*M\otimes_\R\C$ embeds into the invertible 
part of $C(M)$. Indeed, there exists a short exact sequence 
\[1\longrightarrow \cg(\wedge^2T^*M\otimes_\R\C)
\longrightarrow{\rm Iso}(\wedge^2T^*M\otimes_\R\C)
\longrightarrow {\rm Diff}^+(M)\longrightarrow 1\]
which injects the gauge group into the isomorphism group. 
Moreover $\Phi\mapsto(\Phi^{-1})^*$ as a bundle pullback gives a group 
injection ${\rm Diff}^+(M)\subset{\rm Iso}(\wedge^2T^*M\otimes_\R\C)$ too 
whose composition with the projection 
${\rm Iso}(\wedge^2T^*M\otimes_\R\C)\rightarrow {\rm Diff}^+(M)$ is the 
identity. Consequently this sequence can be supplemented to  
\[1\longrightarrow \cg(\wedge^2T^*M\otimes_\R\C)
\longrightarrow{\rm Iso}(\wedge^2T^*M\otimes_\R\C)\:\:^
{\longrightarrow}_{\longleftarrow}\:\:{\rm Diff}^+(M)\longrightarrow 1\]
implying a splitting ${\rm Iso}(\wedge^2T^*M\otimes_\R\C))=
\cg(\wedge^2T^*M\otimes_\R\C)\rtimes{\rm Diff}^+(M)$ as a
semi-direct product. Therefore ${\rm Diff}^+(M)\subset C(M)$ too, and 
$\Phi\in{\rm Diff}^+(M)$ acts on $\Omega^2(M;\C)$ by the corresponding 
pullback $(\Phi^{-1})^*$ on $2$-forms which is a $\C$-linear operator. 
Composing this with $C(M)\subset\fr$ we find that 
${\rm Diff}^+(M)\subset\fr$ {\it i.e.}, the group of orientation-preserving 
diffeomorphisms form a part of $\fr$ too. 

Furthermore, it is easy to see that ${\rm Diff}^+(M)\subset\fr$ acts by 
unitaries on $\ch$ via the standard representation 
$\pi:\fr\rightarrow\fb(\ch)$. Using any definite $L^2$-scalar product on 
$\Omega^2(M;\C)$ and a corresponding base $\{\varphi_1,\varphi_2,\dots\}$ as 
above, we can represent $(\Phi^{-1})^*$ by a countable infinite 
complex matrix according to (\ref{clifford}) whose Hermitian conjugate 
$((\Phi^{-1})^*)^\divideontimes$ satisfies 
$\big(\varphi,(\Phi^{-1})^*\psi\big)_{L^2(M,g)}=
\big(((\Phi^{-1})^*)^\divideontimes\varphi,\psi\big)_{L^2(M,g)}$. But the 
diffeomorphism-invariance of the $L^2$-scalar product implies     
\[\big(((\Phi^{-1})^*)^\divideontimes\varphi,\psi\big)^\C_{L^2(M,g)}=
\big(\varphi,(\Phi^{-1})^*\psi\big)^\C_{L^2(M,g)}=
\big(\Phi^*\varphi, \Phi^*(\Phi^{-1})^*\psi\big)^\C_{L^2(M,g)}=
\big(\Phi^*\varphi,\psi\big)^\C_{L^2(M,g)}\]
demonstrating $((\Phi^{-1})^*)^\divideontimes=\Phi^*$ that is,
diffeomorphisms are unitary over $\big(\Omega^2(M;\C)\:,
\:(\:\cdot\:,\:\cdot\:)_{L^2(M,g)}\big)$. Consider now the dense subset 
of $\ch$ having elements of the form $\hat{A}\in\ch$ where 
$A\in C(M)\subset\fr$. Then taking the scalar product 
$(\:\cdot\:,\:\cdot\:)$ of $\ch$ and 
exploiting the cyclic property of the trace as well as the already recognized 
unitarity of $\Phi$ on $\big(\Omega^2(M;\C)\:,
\:(\:\cdot\:,\:\cdot\:)_{L^2(M,g)}\big)$ we find that 
\begin{eqnarray}
\big(\pi((\Phi^{-1})^*)\hat{A},\pi((\Phi^{-1})^*)\hat{B}\big)&=&\tau\big(
(\Phi^{-1})^*A((\Phi^{-1})^*B)^\divideontimes\big)=
\tau\big(((\Phi^{-1})^*)^\divideontimes (\Phi^{-1})^*AB^\divideontimes\big)=
\tau(AB^\divideontimes)\nonumber\\
&=&\big(\hat{A},\hat{B}\big)\nonumber
\end{eqnarray}
proving that the embedding ${\rm Diff}^+(M)\subset\fr$ provided by 
$\Phi\mapsto(\Phi^{-1})^*$ is indeed unitary. This also demonstrates in 
particular that $\vert\tau((\Phi^{-1})^*)\vert=1$. 

3. Last but not least we exhibit an especially important class of 
elements in $\fr$ which permit to inject $M$ back into its $\fr$ as 
follows. First write $M':=M$ and put $P_{M'}:=0\in C(M)=\fm_1(C(M))$. Next, 
take a finite open covering $\{U_i\}_{i=1,\dots,m}$ of $M$ and its associated 
amplified algebra $\fm_m(C(U))$ as before. After re-indexing if necessary, 
consider the open subsets $U:=U_1$ and $U':=U_1\cup U_2\cup\dots\cup U_{m_1}$ 
with $1\leqq m_1\leqq m$ such that $U_1\cap U_i\not=\emptyset$ whenever 
$1\leqq i\leqq m_1$ ({\it i.e.}, take all subsets $U_i$ having non-empty 
intersections with the distinguished open subset $U_1$ in the covering). 
Note that $U\Subset U'$ ({\it i.e.}, $\overline{U}\subset U'$). Picking the 
local zero and unit $0,1\in C(U)$ consider the diagonal matrix $P_{U'}
                                          \in\fm_m(C(U))$ containing $0$'s 
in the first $m_1$ entries and $1$'s in the last $m-m_1$ entries along its 
diagonal. This matrix maps onto a non-zero element in 
$C(M)\cong\fm_m(C(U))/J(U)$ what we continue to write as $P_{U'}$. 
As an operator on smooth $2$-forms, it acts like 
$P_{U'}\varphi=P_{U'}\Big(\sum\limits_{i=1}^m\rho_i\varphi\Big)=
\sum\limits_{i=m_1+1}^m\rho_i\varphi$ hence is a projection onto the subspace 
$\Omega^2(M,\overline{U'};\C)\subset\Omega^2(M;\C)$ of $2$-forms vanishing on
the closed subset $\overline{U'}\subseteqq M$. Note however that, 
considered as $P_{U'}\in\fr$ via $C(M)\subset\fr$, this operator 
is already an abstract projection too, yet characterized by its 
geometric behaviour that preserves $\Omega^2(M,\overline{U'};\C)$. 
Conversely, in fact any projection in $\fr$ possessing this behaviour can be 
taken to play the role of a $P_{U'}$ here. Repeating the construction for a 
countable sequence of nested open subsets and their surrounders 
\[\begin{matrix}
M&\supset &U&\supset &V&\supset&\dots\\
\rotatebox[origin=c]{90}{$=$}& &\cap & &\cap &&\\
M'&\supset &U'&\supset &V'&\supset&\dots
\end{matrix}\] 
collapsing to some point $x\in M$ during a countable step-by-step 
homogeneous refinement of the open covering as before, and referring to 
the decomposition (\ref{clifford}) as well as using $C(M)\subset\fr$, 
in this way we come up with an element $P_x\in\fr$. This is an abstract 
projection of $\fr$ however note that $P_x$ inherits the geometric property 
that it acts as the identity precisely on $\Omega^2(M,x;\C)$ as well. 
Hence surely $0\not=P_x\not=1$ and $P_x\not=P_y$ for every $x,y\in M$ such 
that $x\not=y$. We obtain therefore a projection $P_x\in\fr$ assigned 
to the point $x\in M$ and the resulting map 
\begin{equation}
i_M:M\longrightarrow\fr
\label{beagyazas}
\end{equation}
defined by $x\mapsto P_x$ is obviously injective. 

If $x,y\in M$ take a smooth $1$-parameter subgroup 
$\{\Phi_t\}_{t\in[-1,+1]}\subset{\rm Diff}^+(M)$ such that the curve 
$t\mapsto\Phi_t(x)$ connects $x$ with $y$ {\it i.e.}, $\Phi_0(x)=x$ and 
$\Phi_1(x)=y$ or equivalently $\Phi_1^{-1}(y)=\Phi_{-1}(y)=x$ consequently 
for the corresponding subspaces $(\Phi_{-1})^*\Omega^2(M,x;\C)=
\Omega^2(M,y;\C)$. Consider the induced family of projections 
$P(t):=\Phi_{-t}^*P_x\Phi^*_t$ within $\fr$. Then $x\mapsto P_x$ is 
continuous in the sense that $P(0)=P_x$ and $P(1)=P_y$. In particular all of 
these projections are unitary equivalent hence $0<\tau(P_x)=\tau(P_y)<1$ for 
every $x,y\in M$. It also follows that the image of $M$ in $\fr$ is always 
closed in any usual topology on $\fr$. Indeed, let $x_1,x_2,\dots$ be a 
countable sequence 
of points in $M$ such that the corresponding series of projections 
$P_{x_1},P_{x_2},\dots$ in $\fr$ converges to a unique operator $A\in\fr$. Then 
there exist (non-unique) elements $\Phi_1,\Phi_2,\dots$ in ${\rm Diff}^+(M)$ 
such that $x_i=\Phi_i(x_1)$ hence $P_{x_i}=(\Phi_i^{-1})^*P_{x_1}\Phi^*_i$. By 
compactness of $M$ there is a (sub)sequence converging to a point $x\in M$ as 
$j\rightarrow+\infty$ and a (non-unique) transformation $\Phi$ such that 
$\Phi(x_1)=x$. We can suppose that 
$\Phi_{x_j}\rightarrow\Phi$ in the induced subspace topology on 
${\rm Diff}^+(M)\subset\fr$. Regarding the projections 
$P_{x_j}=(\Phi_j^{-1})^*P_{x_1}\Phi^*_j\rightarrow(\Phi^{-1})^*P_{x_1}\Phi^*=
P_x$ as $j\rightarrow+\infty$ thus the original limit 
operator satisfies that $A=P_x$ with some unique $x\in M$. 
We conclude that (\ref{beagyazas}) in any usual topology on $\fr$ gives 
rise to a closed embedding of $M$ into $\fr$ via projections.  

4. Having understood (\ref{beagyazas}) for a given 
$4$-manifold, let us compare them for different spaces. So 
let $M,N$ be two connected compact oriented smooth $4$-manifolds and 
consider their corresponding embeddings $M\subset\fr$ and $N\subset\fr$ into 
their von Neumann algebras via (\ref{beagyazas}) respectively. Regardless 
what $M$ or $N$ are, their abstractly given algebras are both 
hyperfinite factors of ${\rm II}_1$ type, therefore these latter objects are 
isomorphic \cite[Theorem 11.2.2]{ana-pop} however not in a 
canonical fashion. Indeed, if $F':\fr\rightarrow\fr$ is an abstract 
isomorphism between the $\fr$'s for $M$ and $N$ respectively, then any other 
abstract isomorphism between them has the form $F''=\beta^{-1}F'\alpha$ where 
$\alpha$ and $\beta$ are $\C$-linear $\divideontimes$-automorphisms (in short 
from now on: automorphisms) of the abstractly given $\fr$ for $M$ and the 
abstractly given $\fr$ for $N$ respectively.

Therefore to understand the freedom how operator algebras for different 
$4$-manifolds are identified we have to understand automorphisms of $\fr$. 
First, let us see how inner automorphisms of the weakly dense subalgebra 
$C(M)$, which are of course conjugations with its unitary elements, 
look like. It follows from the earlier decomposition of invertible elements 
that an inner automorphism of $C(M)$ admits a unique decomposition 
${\rm Ad}_\gamma{\rm Ad}_{(\Phi^{-1})^*}$ where 
$\gamma\in\cg(\wedge^2T^*M\otimes_\R\C)$ is a $\C$-linear 
gauge transformation of the bundle $\wedge^2T^*M\otimes_\R\C$ hence leaves $M$ 
pointwise fixed, and $\Phi\in{\rm Diff}^+(M)$ is a diffeomorphism of $M$ 
which also preserves $M$ as a whole. It then readily follows that the 
image of $M$ within $\fr$ given by (\ref{beagyazas}) is preserved by this 
automorphism. Next, consider the strong$^\divideontimes$-topology on 
$\fb(\ch)$. Taking $\fr\subset\fb(\ch)$ it is clear that $C(M)\subset\fr$ is 
a dense subalgebra; moreover also taking the unique unitary implementation 
${\rm Aut}\:\fr\subset\fb(\ch)$ of the full automorphism group 
\cite[Proposition 7.5.2]{ana-pop}, it is known 
\cite[Theorem 4]{sak2} that the subgroup of inner automorphisms is also 
dense within ${\rm Aut}\:\fr$. Consequently ${\rm Inn}\:C(M)$ itself is 
strong$^\divideontimes$-dense in ${\rm Aut}\:\fr$ hence any 
automorphism $\alpha$ of $\fr$ arises as   
\begin{equation}
\alpha=\lim\limits_i\big({\rm Ad}_{\gamma_{\:i}}{\rm Ad}_{(\Phi^{-1}_i)^*}\big)
\label{automorfizmus}
\end{equation}
where the limit is taken in the strong$^\divideontimes$-topology on $\fb(\ch)$. 
Therefore, taking into account as well that $M$ is a 
strong$^\divideontimes$-closed subset of $\fr$ under (\ref{beagyazas}), we 
conclude that $\alpha$ also preserves $M\subset\fr$. Likewise, 
${\rm Ad}_\delta{\rm Ad}_{(\Psi^{-1})^*}$ is the shape of an inner 
automorphism of $C(N)$ and a generic automorphism $\beta$ arises as 
strong$^\divideontimes$-limit of them. These make 
sure that given an abstract isomorphism $F':\fr\rightarrow\fr$ between the von 
Neumann algebras constructed for $M$ and $N$ respectively, then any other 
abstract isomorphism can be expressed as 
$F''=\lim\limits_i\big({\rm Ad}_{\Psi_i^*}({\rm Ad}_{\delta_i^{-1}}F'
{\rm Ad}_{\gamma_{\:i}}){\rm Ad}_{(\Phi_i^{-1})^*}\big)$ between them. 
Consequently abstract isomorphisms between a pair of abstractly given von 
Neumann algebras differ only by automorphisms which preserve 
their underlying $4$-manifolds as embedded within their algebras via 
(\ref{beagyazas}) respectively; that is, differences between identifications 
are inessential in this sense. Our overall conclusion here therefore is that 
{\it up to diffeomorphisms every connected compact oriented smooth 
$4$-manifold $M$ admits a closed embedding into a commonly given abstract 
von Neumann algebra $\fr$ via (\ref{beagyazas})}. 

5. We close the partial comprehension of $\fr$ with an observation regarding 
its general structure. Recall that the group of invertible 
elements of $C(M)$ decomposes as the semi-direct product of the group of 
gauge transformations of $\wedge^2T^*M\otimes_\R\C$ and the group of 
diffeomorphisms of $M$; thus via $C(M)\subset\fr$ a 
weakly dense subset of $\fr$ therefore admits a decomposition into bundle 
endomorphisms of $\wedge^2T^*M\otimes_\R\C$ and diffeomorphisms of $M$. 
Consequently given a $4$-manifold $M$, its associated von Neumann algebra 
$\fr$ is geometric in the sense that it is weakly generated by geometric 
operators.  

{\it Summing up our findings so far.} Given a connected compact 
oriented smooth $4$-manifold $M$ there exists a hyperfinite factor von 
Neumann algebra of ${\rm II}_1$ type $\fr$ associated to $M$ such that the 
solely input in its construction has been the pairing (\ref{integralas}). 
Hence $\fr$ depends only on the orientation and the smooth structure of $M$. It 
contains, certainly among many other non-geometric operators, the space 
$M$ itself as projections, its orientation-preserving diffeomorphisms as 
well as the space of algebraic curvature tensors. Nevertheless 
$\fr$ is geometric in the sense that it is generated by $M$'s 
operators alone. It is remarkable that despite the {\it plethora} of 
smooth $4$-manifolds detected since the early 1980's their associated 
von Neumann algebras here are unique offering a sort of justification 
terming $\fr$ as ``universal''. Moreover one is permitted to say that 
every connected oriented smooth $4$-manifold $M$ (perhaps together with 
its curvature tensor) embeds up to diffeomorphisms 
hence in a functorial way into a {\it common} $\fr$, 
and to look upon this von Neumann algebra as a natural common non-commutative 
space generalization of all oriented smooth $4$-manifolds (or all 
$4$-geometries). This universality also justifies the simple notation $\fr$ 
used throughout the text. 
\vspace{0.1in}

\noindent{\it Proof of Theorem \ref{fotetel1}}: Putting together all 
considerations so far the theorem is proved.\hspace{2.5cm}$\square$
\vspace{0.05in}

\noindent{\it Representations of $\fr$ and a new smooth $4$-manifold 
invariant.} We can now proceed further: the next lemmata closely follow 
\cite[Lemmata 2.1-2.4]{ete4} but with substantially improved constructions 
and extended contents.

\begin{lemma} Let $M$ be a connected compact oriented smooth $4$-manifold 
and $\fr$ its von Neumann algebra with trace $\tau$ as before. Then there 
exists a complex separable Hilbert space $\ci (M)^\perp$ and a representation 
$\rho_M:\fr\rightarrow\fb (\ci (M)^\perp)$ with the following properties. 
If $\pi:\fr\rightarrow\fb (\ch)$ is the standard representation constructed 
above then $0\subseteqq\ci (M)^\perp\subsetneqq\ch$ and 
$\rho_M=\pi\vert_{\ci (M)^\perp}$ holds. 

Moreover the unitary equivalence class of $\rho_M$ is invariant under 
orientation-preserving diffeomorphisms of $M$. Thus the Murray--von Neumann 
coupling constant\footnote{Also called the {\it $\fr$-dimension} of a 
left $\fr$-module hence denoted $\dim_{\fr}$, cf. 
\cite[Chapter 8]{ana-pop}.\label{dimenzio}} of $\rho_M$ is invariant under 
orientation-preserving diffeomorphisms. Writing 
$Q_M:\ch\rightarrow\ci(M)^\perp$ for the orthogonal projection and taking into 
account the characterization of $\ci(M)^\perp$ the coupling constant is equal 
to $\tau(Q_M)\in[0,1)$ hence $q(M):=\tau(Q_M)$ is a 
smooth $4$-manifold invariant such that $q(M)\in[0,1)$ holds.
\label{terbelilemma}
\end{lemma}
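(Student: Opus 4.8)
\emph{Proof strategy.} I would obtain $\rho_M$ as a subrepresentation of the standard representation $\pi$ carved out by a geometrically defined positive functional on $\fr$, and then read off the asserted properties from Murray--von Neumann dimension theory for the finite factor $\fr$ in standard form \cite[Chapter 8]{ana-pop}. Recall first that the construction of the previous subsection realizes $\fr$ in standard form on $\ch$, the $\Vert A\Vert_2=\tau(A^\divideontimes A)^{1/2}$-completion of $C(M)$, with $\hat1$ cyclic and separating; since $\tau$ is a trace the Tomita conjugation $J$ (with $J\hat A=\widehat{A^\divideontimes}$) gives $J\fr J=\fr'=\pi(\fr)'$, again a hyperfinite ${\rm II}_1$ factor carrying a canonical faithful normalized trace still written $\tau$, and closed $\pi(\fr)$-invariant subspaces of $\ch$ correspond bijectively to projections in $\fr'$, the $\fr$-dimension of such a subspace being the $\tau$-trace of the corresponding projection, with $\dim_\fr\ch=\tau(1)=1$.

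\emph{The representation.} To $M$ (together with the auxiliary datum used in Section \ref{two}, e.g.\ a pair $(\Sigma,\omega)$ there) I attach the positive normalized functional $F_{\Sigma,\omega}$ on $\fr$ obtained by continuously extending $A\mapsto\int_\Sigma A\omega$ from $\ed(\Omega^2_c(M;\C))\cap\fr$, and take the subrepresentation of $\pi$ it determines: letting $\mathfrak n=\{A\in\fr:F_{\Sigma,\omega}(A^\divideontimes A)=0\}$ be its left kernel, whose weak closure is a weakly closed left ideal $\fr q$ with $q$ a projection, I put $\ci(M):=\overline{\fr q}^{\,\Vert\cdot\Vert_2}\subseteq\ch$, take $\ci(M)^\perp$ to be its orthocomplement, and set $\rho_M:=\pi|_{\ci(M)^\perp}$. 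The orthogonal projection $P_M$ onto $\ci(M)^\perp$ then lies in $\fr'$, and $\rho_M$ is the GNS representation of (the normal part of) $F_{\Sigma,\omega}$; the separability of $\ci(M)^\perp$ is inherited from that of $\ch$.

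\emph{Properties.} Since $F_{\Sigma,\omega}$ is an integral over the proper subset $\Sigma$, any operator acting trivially near $\Sigma$ lies in $\mathfrak n$, so $\ci(M)\neq\{0\}$ and hence $\{0\}\subseteq\ci(M)^\perp\subsetneq\ch$, where the left inclusion may be an equality, i.e.\ $\rho_M$ may be trivial. Because $\fr'$ is a finite factor with faithful normalized trace, $\tau(P_M)<\tau(1)=1$, so the coupling constant $\dim_\fr\ci(M)^\perp=\tau(P_M)$ lies in $[0,1)$; being strictly below $\dim_\fr\ch=1$, which is a unitary invariant, this forces $\rho_M\not\cong\pi$. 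For the diffeomorphism invariance, an orientation-preserving $\Phi\in{\rm Diff}^+(M)$ acts on the entire construction through $(\Phi^{-1})^*$, which by Section \ref{three} is a unitary element of $\fr$; thus $\Phi$ acts by the \emph{inner} automorphism ${\rm Ad}_{(\Phi^{-1})^*}$ of $\fr$, the representation attached to the $\Phi$-transported datum is naturally $\rho_M\circ{\rm Ad}_{(\Phi^{-1})^*}$, and precomposing any representation with an inner automorphism leaves its unitary equivalence class unchanged. Hence the class of $\rho_M$, its coupling constant, and therefore $\gamma(M):=\tau(P_M)$ depend only on the oriented smooth $4$-manifold $M$, i.e.\ $\gamma(M)$ is a smooth $4$-manifold invariant.

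\emph{Main obstacle.} The delicate point is identifying the GNS data of $F_{\Sigma,\omega}$ with an honest closed subspace of $\ch$ and controlling its size: being an integral over a measure-zero set, $F_{\Sigma,\omega}$ is not obviously absolutely continuous with respect to the volume-type trace defining $\ch$, so one must analyse its normal versus singular parts carefully, exhibit $\ci(M)$ explicitly as a weakly closed left ideal, and in particular verify that $\ci(M)^\perp$ is genuinely proper in $\ch$ so that the strict bound $\gamma(M)<1$ holds (while allowing the degenerate case $\ci(M)^\perp=\{0\}$). Once this is in place, the dimension count, the separation from $\pi$, and the diffeomorphism invariance are all formal consequences of ${\rm II}_1$-factor theory together with the already-established inclusion ${\rm Diff}^+(M)\subset{\rm U}(\ch)\cap\fr$.
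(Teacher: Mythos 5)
Your outline follows the paper's route closely: a GNS-type construction from the functional $F_{\Sigma,\omega}$, realized as a subrepresentation of the standard representation $\pi$ on a closed invariant subspace of $\ch$, with the coupling constant read off as the trace of the corresponding projection and diffeomorphism invariance obtained from the inner automorphisms ${\rm Ad}_{(\Phi^{-1})^*}$. However there is a genuine gap: you never show that the subspace $\ci(M)$ --- hence $\rho_M$ and $\gamma(M)=\tau(P_M)$ --- is independent of the auxiliary pair $(\Sigma,\omega)$. This is not a formality; it is the bulk of the paper's actual proof. Two pairs $(\Sigma,\omega)$ and $(\Sigma',\omega')$ on the same $M$ are in general not related by any diffeomorphism, so your ${\rm Ad}_{(\Phi^{-1})^*}$ argument does not cover this; without it you have only constructed an invariant of the triple $(M,\Sigma,\omega)$, and the assertion that $\gamma(M)$ is a smooth $4$-manifold invariant does not follow. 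The paper closes this by proving directly that the left kernel $I_{\Sigma,\omega}(M)$ is the same for all admissible pairs: it rewrites $F_{\Sigma,\omega}(A^\divideontimes A)=(A^\divideontimes A\,\Omega,\varphi_\Sigma)_{L^2(M,g)}$ with $*\varphi_\Sigma$ a Poincar\'e-dual representative of $\Sigma$, picks an invertible $T\in{\rm Aut}(\Omega^2_c(M;\C))$ carrying one datum to the other, and applies the Cauchy--Schwarz inequality for the positive functional in both directions to conclude $I_{\Sigma',\omega'}(M)=I_{\Sigma,\omega}(M)$. Some version of this comparison argument must be supplied before the final sentence of your proof is legitimate.

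Two smaller points. First, you normalize $F_{\Sigma,\omega}(1)=1$, which is impossible precisely when $\langle[\Sigma],[\omega]\rangle=0$ (for instance whenever $H_2(M;\C)=\{0\}$); the paper treats this case by a separate branch, declaring $I(M)=\fr$ and $\rho_M$ trivial, and you need that branch for the clause ``$\rho_M$ can be the trivial representation'' and for the left endpoint $\gamma(M)=0$ to be realized. Second, your $\ci(M)$ is the $L^2$-closure of the left ideal itself, whereas the paper takes the closure of $C(M)\cap I(M)I(M)^\divideontimes$; your choice still yields a $\pi(\fr)$-invariant subspace with invariant complement, so this deviation is harmless for the stated conclusions, but it is a different subspace from the paper's and should be flagged as such rather than presented as the same construction.
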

 
\begin{proof} First let us exhibit a representation of $\fr$; we proceed 
from the Gelfand--Naimark--Segal construction however relative to the 
already existing standard representation. As we have 
seen every connected compact oriented smooth $4$-manifold $M$ admits an 
embedding into its von Neumann algebra $\fr$ via (\ref{beagyazas}) and this 
map looks like $x\mapsto P_x$ {\it i.e.}, to every point $x\in M$ the map 
$i_M:M\rightarrow\fr$ assigns a projection $P_x\in\fr$ which always 
satisfies that $0\not=P_x\not=1$. Take any 
Riemannian metric $g$ on $M$ having volume form $\mu_g$ over $M$. 
Then consider, as a natural choice, the map $F_{M,g}:\fr\rightarrow\C$ given by 
\[F_{M,g}(A):=\int\limits_{x\in M}i_M^*\tau\big(AP_x\big)\mu_g(x)\] 
which is well-defined due to compactness of $M$. This map is obviously 
$\C$-linear, ultraweakly continuous and satisfies 
$F_{M,g}(A^\divideontimes)=\overline{F_{M,g}(A)}$. Moreover we compute 
$F_{M,g}(A^\divideontimes A)=\int_Mi_M^*\big\Vert\widehat{AP_x}
\big\Vert^2\mu_g(x)\geqq 0$ consequently $F_{M,g}$ is a non-negative 
functional on $\fr$ and in particular $F_{M,g}(1^\divideontimes1)>0$. 
Therefore an application of the standard inequality 
$\vert F_{M,g}(A^\divideontimes B)\vert^2
\leqq F_{M,g}(A^\divideontimes A)F_{M,g}(B^\divideontimes B)$ implies that 
$0\subseteqq I(M,g)\subseteqq\fr$ consisting of
those operators $B\in\fr$ for which $F_{M,g}(B^\divideontimes B)=0$ 
is a multiplicative left-ideal in $\fr$ such that 
$\C^\times1\not\subset I(M,g)$ thus surely 
$0\subseteqq I(M,g)\subsetneqq\fr$. In fact $I(M,g)$ is independent of 
the metric $g$ involved in its definition: if $h$ is another Riemannian metric 
on $M$ then there exists a positive function $f:M\rightarrow\R$ such that 
\[0\leqq F_{M,h}(A^\divideontimes A)=
\int\limits_{x\in M}i_M^*\tau\big(A^\divideontimes AP_x\big)\mu_h(x)=
\int\limits_{x\in M}i_M^*\tau\big(A^\divideontimes AP_x\big)f(x)\mu_g(x)
\leqq F_{M,g}(A^\divideontimes A)\Vert f\Vert_{L^\infty(M)}\]
hence $I(M,g)\subseteqq I(M,h)$ and likewise we see that $I(M,h)\subseteqq 
I(M,g)$ {\it i.e.}, $I(M,h)=I(M,g)$. It readily follows from the structure of 
$F_{M,g}$ that 
$A\in I(M,g)$ if and only if $AP_x=0$ for $\mu_g$-a.e. $x\in M$ 
hence by continuity of (\ref{beagyazas}) for all $x\in M$. This demonstrates 
that, being $P_x$ surely not invertible, non-trivial solutions in principle 
are allowed and belong to the infinitesimal ideals 
$0\not=\fr(1-P_x)\subsetneqq\fr$ for all $x\in M$. These are ultraweakly 
closed left-ideals which in fact are, via diffeomorphisms of $M$, unitary 
equivalent. Therefore, considering the things so far merely as a motivation, 
for a connected compact oriented smooth $4$-manifold $M$ hereby we define 
\begin{equation}
I(M):=\bigcap\limits_{x\in M}\fr\:(1-P_x)
\label{ideal}
\end{equation}
which is by construction an ultraweakly closed left-ideal satisfying 
$0\subseteqq I(M)\subsetneqq\fr$ and is invariant under diffeomorphisms 
(acting as inner automorphisms of $\fr$).

Recall that the standard representation $\pi:\fr\rightarrow\fb(\ch)$ arises via 
multiplication from the left in $\fr $ on itself. Since 
$0\subseteqq I(M)\subsetneqq\fr$ is a left-ideal $\pi$ restricts to a 
representation of $\fr$ on the Hilbert space completion 
$0\subseteqq\ci(M)\subseteqq\ch$ of $I(M)$ given by (\ref{ideal}). 
The scalar product on $\fr\subset\ch$ looks like 
$(\hat{A},\hat{B})=\tau(AB^\divideontimes)$ hence satisfies 
$(\widehat{AB},\widehat{C}\:)=(\hat{B},\widehat{A^\divideontimes C})$; 
consequently the standard representation restricts to the orthogonal 
complementum $0\subseteqq\ci(M)^\perp\subseteqq\ch$ as well. Note that 
$\ci(M)^\perp$ as a complete complex vector is isomorphic to $\ch/\ci(M)$. 
Therefore in the spirit of the Gelfand--Naimark--Segal construction we want to 
take the representation of $\fr$ on $\ci(M)^\perp$ hence for a given 
$M$ we introduce the representation $\rho_M:\fr\rightarrow\fb (\ci(M)^\perp)$ 
to be simply the restricted representation $\pi\vert_{\ci(M)^\perp}$. 
The general theory, namely \cite[Proposition 8.2.3]{ana-pop} says that if 
$Q_M:\ch\rightarrow\ci (M)^\perp$ is the orthogonal projection then 
$Q_M\in\fr'$ because for the left $\fr$-module 
$0\subseteqq\ci(M)\subseteqq\ch$ holds {\it i.e.}, it lies in the standard 
one (and not in $\ch\otimes\ell^2(\N)$ as in general). The Murray--von 
Neumann coupling constant of $\rho_M$ is therefore equal to 
$\tau (Q_M)\in[0,1]$. Note that the coupling constant depends 
only on the unitary equivalence class of $\rho_M$ hence in particular it 
is preserved by orientation-preserving diffeomorphisms of $M$ (which are 
unitary on $\ch$). We conclude that 
$q(M):=\tau (Q_M)\in [0,1]$ is a smooth invariant of the compact 
$4$-manifold $M$.

Concerning an important restriction on the spectrum of $q$, let 
$\{U_i\}_{i=1,\dots,m}$ be a finite open covering of $M$ consisting of small 
open $4$-balls and for every $U_i\subset M$ let 
$U_i\Subset U'_i\subseteqq M$ be the union of those members of the open 
covering which overlap with $U_i$ as before. By the general theory of 
ultraweakly closed left-ideals in a von Neumann algebra 
\cite[Theorem 6.1.12]{jon}, there exist unique projections 
$P_{U'_i}\in\fr$ such that $\bigcap\limits_{x\in U'_i}\fr(1-P_x)=
\fr(1-P_{U'_i})$ for every $i=1,\dots,m$. Hence introducing 
$I(U'_i):=\fr(1-P_{U'_i})$ 
the uncountable infinite decomposition in (\ref{ideal}) can be coarse-grained 
to a finite one $I(M)=\bigcap\limits_{i=1}^mI(U'_i)$. Recall that the 
geometric meaning of the abstract projection $P_x\in\fr$ assigned to the point 
$x\in M$ is that it acts as the identity on $\Omega^2(M,x;\C)$ too. 
Taking an arbitrary operator $A\in\bigcap\limits_{x\in U'_i}\fr(1-P_x)
\cap C(M)$, then $AP_x=0$ implies $0=(AP_x)\Omega^2(M,x;\C)=
A(P_x\Omega^2(M,x;\C))=A\Omega^2(M,x;\C)$ for every $x\in U'_i$ or 
writing equivalently, $0=A\Omega^2(M,\overline{U'_i};\C)$. 
Hence $\Omega^2(M,\overline{U'_i};\C)\subseteqq\bigcap\limits_A\ker A$ moreover 
we can suppose that $\bigcap\limits_A\ker A\subseteqq
\Omega^2(M,\overline{U_i};\C)$ on the smaller subset $U_i\Subset U'_i$. 
However also $A\in\fr(1-P_{U'_i})\cap C(M)$ that is, $AP_{U'_i}=0$ which 
implies that $0=(AP_{U'_i})\Omega^2(M;\C)=A(P_{U'_i}\Omega^2(M;\C))$, dictating 
the abstract projection $P_{U'_i}\in\fr$ to act on $\Omega^2(M;\C)$ too, such 
that $P_{U'_i}\Omega^2(M;\C)\subseteqq\bigcap\limits_A\ker A\subseteqq
\Omega^2(M,\overline{U_i};\C)$. In this way we recognize by referring to 
the discussion above (\ref{beagyazas}) that $P_{U'_i}\in\fr$ is nothing but an 
approximating projection of an elementary one $P_{x_i}\in\fr$ such that 
$x_i\in U_i$. Consequently, if the open covering of $M$ is fine 
enough, we can suppose that the corresponding projections are already 
sufficiently close to each other {\it i.e.}, $[[P_{x_i}-P_{U'_i}]]<1$ for every 
$i=1,\dots,m$ where $[[\:\cdot\:]]$ is the norm on $\fr$. However in this 
situation the projections $P_{x_i},P_{U'_i}\in\fr$ are 
unitary equivalent (in general within $\fb(\ch)$ only), hence the same 
holds for their corresponding ideals $\fr(1-P_{x_i})\subsetneqq\fr$ and 
$\fr(1-P_{U'_i})=I(U'_i)\subsetneqq\fr$. We conclude that, in the 
case of a sufficiently fine but yet finite open covering $\{U_i\}_{i=1}^m$ of 
$M$, the already known fact $0\not=\fr(1-P_{x_i})$ implies that $0\not=I(U'_i)$ 
too, for every $x_i\in U_i\Subset U'_i$ and $i=1,\dots,m$. 

Consider $U_1\subset M$ and take any $0\not=A_1\in I(U'_1)$. Likewise, 
take $U_2\subset M$ and pick $0\not=A_2\in I(U'_2)$ such that 
$0\not=A_1A_2$ (we can assume this by slightly perturbing the factors if 
necessary). On the one hand, being $I(U'_1)\subsetneqq\fr$ a left-ideal, 
it is acted upon by $\fr$ from the left therefore its Hilbert space 
completion $\ci(U'_1)\subseteqq\ch$ is a left $\fr$-module. Take again 
by \cite[Proposition 8.2.3]{ana-pop} the unique projection 
$Q_{U'_1}\in\fr'$ satisfying $\ci(U'_1)=(1-Q_{U'_1})\ch$. 
Then there exists an $\xi\in\ch$ such that $\hat{A}_1=(1-Q_{U'_1})\xi$. But 
this means that there exists a sequence of elements in $\fr$ such that its 
corresponding sequence $\{\hat{B}_j\}_{j=1,2,\dots}$ converges to $\xi$ in the 
norm $\Vert\:\cdot\:\Vert$ on $\ch$ as $j\rightarrow+\infty$ hence by the 
continuity of multiplication we get 
$(1-Q_{U'_1})\hat{B}_j\rightarrow\hat{A}_1$. 
Hence $(1-Q_{U'_1})\widehat{B_jA}_2\rightarrow\widehat{A_1A}_2$ 
as $j\rightarrow+\infty$ yielding $\widehat{A_1A}_2\in\ci(U'_1)$. On 
the other hand $A_1A_2\in I(U'_2)$ implies $\widehat{A_1A}_2\in\ci(U'_2)$ too. 
Thus $0\not=\widehat{A_1A}_2\in\ci(U'_1)\cap\ci(U'_2)=
\ci(U'_1\cup U'_2)$. Repeating this procedure we come up with a finite product 
$0\not=\widehat{A_1A_2\dots A}_m\in\bigcap\limits_{i=1}^m\ci(U'_i)=\ci(M)$. 

Eventually we find $0\subsetneqq\ci(M)\subseteqq\ch$ hence 
for the corresponding projection $0<1-Q_M\leqq 1$ demonstrating that 
$0\leqq q(M)<1$. Therefore essentially the compactness of $M$ implies that 
$q(M)\in[0,1)$.
\end{proof}

\noindent Elements of Jones' subfactor theory (for a summary cf. {\it e.g.} 
\cite[Section 9.4]{ana-pop} or \cite[Chapter V.10]{con} or 
\cite[Chapter 19]{jon}) offer a computational tool for $q$ just introduced.

\begin{lemma} Let $M$ be a connected compact oriented smooth $4$-manifold and
$q(M)\in[0,1)$ its smooth invariant as before. Then there exists a 
subfactor $0\not=\fii(M)\subsetneqq\fr$ satisfying 
$4\leqq[\fr:\fii(M)]<+\infty$ such that 
$q(M)=1-\frac{4}{[\fr\::\:\fii(M)]}$\:\:. 
\label{jones}
\end{lemma}

\begin{proof} We continue working with the left-ideal (\ref{ideal}) 
distilled out of the smooth structure (and the orientation) of $M$ alone. 
Put
\begin{equation}
\fii(M):=I(M)^\divideontimes I(M)\:\:.
\label{idef}
\end{equation}
We assert that $\fii(M)$ is the largest von Neumann subalgebra within 
$I(M)$ and a non-trivial subfactor within $\fr$. (Meanwhile observe that 
$I(M)I(M)^\divideontimes=\fr$ always.) Indeed, by the general theory 
\cite[Theorem 6.1.12]{jon}, since $0\not=I(M)\subsetneqq\fr$, there 
exists a unique projection in $\fr$ satisfying $0\not=P_M\not=1$ 
such that $I(M)=\fr(1-P_M)$ holds.\footnote{We are using a convention 
compatible with (\ref{ideal}). However note that there is a slight 
abuse of notation here because we also introduced $M'$ and a 
corresponding projection $P_{M'}$ before. Although $M'=M$ note that 
$P_{M'}\not=P_M$ because we put $P_{M'}=0$ before but surely $P_M\not=0$ here. 
Fortunately this will not cause any confusion in what follows.} 
For $\fr$ is a factor hence multiplication on it is surjective, the equalities 
$I(M)^\divideontimes\cap I(M)=(1-P_M)\fr(1-P_M)=I(M)^\divideontimes 
I(M)$ follow at once. However by \cite[Corollary 3.4.4]{jon} we know that 
$(1-P_M)\fr(1-P_M)\subset\fr$ is a non-trivial subfactor (abstractly 
isomorphic to $\fr$), hence the assertion also follows. 

Let us compute the index $[\fr:\fii(M)]
\in\big\{4\cos^2\big(\frac{\pi}{n}\big)\:\big\vert\:n=3,4,\dots\big\}
\cup[4,+\infty]$. First of all, observe that both $0\not=P_M\not=1$ 
and $0\not=1-P_M\not=1$ belong to $\fr\cap\fii(M)'$ hence this intersection 
cannot be trivial; however it is known \cite[Corollary 19.2.4]{jon} that 
$\fr\cap\fii(M)'=\C1$ whenever $1\leqq [\fr:\fii(M)]<4$ 
consequently the index spectrum of 
(\ref{idef}) is restricted to $[\fr:\fii(M)]\in[4,+\infty]$. 
Proceeding further, via Footnote \ref{dimenzio} we can write 
that $q(M)=\tau(Q_M)=\dim_\fr\ci(M)^\perp$ and the standard representation 
satisfies $\dim_\fr\ch=1$ hence $\dim_\fr\ci(M)=1-q(M)$. Moreover, 
restricting the action of $\fr$ on $\ci(M)$ to its subalgebra $\fii(M)$ we 
can render $\ci(M)$ a left $\fii(M)$-module too. Concerning its dimension, 
recall the general fact \cite[Appendix A.2]{ana-pop} that in any 
von Neumann algebra any element can be written as a $\C$-linear combination
of at most $4$ positive elements. Taking any $A\in I(M)$ and $P_x\in\fr$
from (\ref{beagyazas}), inserting the corresponding decomposition
$A=b_1B_1^\divideontimes B_1+\dots+b_4B_4^\divideontimes B_4$ into 
$AP_x=0$ and leveraging the linear independence of the terms
we get $B_i^\divideontimes B_iP_x=0$ hence $0=\big(\widehat
{B_i^\divideontimes B}_i,\hat{P}_x\big)=\Vert\widehat{B_iP_x}\Vert^2$
yielding $B_iP_x=0$ that is, $B_i\in I(M)$ consequently
$B^\divideontimes_iB_i\in I(M)^\divideontimes I(M)=\fii(M)$ for
all $i=1,2,3,4$. Taking the standard representation of $\fii(M)$ on its own
Hilbert space completion having ${\fii(M)}$-dimension $1$, we conclude that 
$\ci(M)$ as a left $\fii(M)$-module decomposes into $4$ copies of the
standard representation yielding $\dim_{\fii(M)}\ci(M)=4$. Thus by 
\cite[Proposition 19.2.3]{jon} we compute
\begin{equation}
[\fr:\fii(M)]=\frac{\dim_{\fii(M)}\ci(M)}{\dim_\fr\ci(M)}=\frac{4}{1-q(M)}\:\:.
\label{formula}
\end{equation}
Finally, since $q(M)\not=1$ via Lemma \ref{terbelilemma}, we are sure that 
$[\fr:\fii(M)]\not=+\infty$ hence the result. 
\end{proof}

\noindent Next we collect some basic useful properties of the invariant.

\begin{lemma} (Reversing orientation.) If $M$ is a connected compact 
oriented smooth $4$-manifold and $\overline{M}$ is its orientation-reversed 
form then $q(\overline{M})=q(M)$.

(Gluing principle.)  Let $M$ and $N$ be two connected compact oriented 
smooth $4$-manifolds and let $M\#N$ be their connected sum. With induced
orientation $M\#N$ is a connected compact oriented smooth $4$-manifold. 
Then   
\[q(M\#N)=q(M)+q(N)-q(M)q(N)\]
and in particular if $\s^n\subset\R^{n+1}$ is the standard $n$-sphere then 
$q(\s^4)=0$.

(Blow-up.) If $M'$ is a smooth blow-up of $M$ then
\[q(M)\leqq q(M')=q(M)+\big(1-q(M)\big)t\]
where $0\leqq t<1$ is given by $t:=q(\C P^2)$.
\label{gammaformulak}
\end{lemma}

\begin{proof} The first assertion is obvious from
$q$'s construction carried out in the proof of Lemma \ref{terbelilemma}.

Concerning the second assertion as a general observation note that the 
$q$-invariant is a well-defined map from (the category) $\sm$ of all 
orientation-preserving diffeomorphism classes of connected compact oriented 
smooth $4$-manifolds into the real interval $[0,1)\subset\R$. But $\sm$ 
forms a commutative semigroup with (one possible) unit $\s^4$ under the 
connected sum operation $\#$. That is, if $X,Y,Z\in\sm$ and $\s^4\in\sm$ 
then $X\#Y\cong Y\#X$ and $(X\#Y)\#Z\cong X\#(Y\#Z)$ and
$X\#\s^4\cong X$. Pick $M,N\in\sm$ with their connected sum $M\#N\in\sm$
and consider the corresponding $q(M),q(N),q(M\#N)\in [0,1)$. Introduce 
$\bullet :[0,1)\times [0,1)\rightarrow [0,1)$
by setting $q(M)\bullet q(N):=q(M\#N)$. The $\bullet$-operation is 
therefore well-defined and satisfies $q(X)\bullet q(Y)=q(Y)\bullet q(X)$ and 
$(q(X)\bullet q(Y))\bullet q(Z)=q(X)\bullet (q(Y)\bullet q(Z))$ and 
$q(X)\bullet q(\s^4)= q(X)$ {\it i.e.}, $q(\s^4)$ is a unit. These ensure us 
that $([0,1),\bullet )$ is a unital commutative semigroup and 
$q:(\sm,\#)\rightarrow ([0,1),\bullet )$ is a unital semigroup 
homomorphism. As a specific observation note that $q$ has been defined in 
Lemma \ref{terbelilemma} as the (continuous) dimension of a closed subspace 
within the $\fr$-module Hilbert space $\ch$; more precisely 
$q(M)=\tau(Q_M)$ arises through a chain of assignments 
$M\mapsto\ci(M)^\perp\mapsto\dim_\fr\ci(M)^\perp$. Lemma \ref{terbelilemma} 
and in particular (\ref{ideal}) imply that $\ci(M\#N)=\ci(M)\cap\ci(N)$ 
for the Hilbert space completion. Introducing the Abelian 
group $(L(\ch),+)$ of closed subspaces with respect to taking 
sum and hence $0\in\ch$ playing the role of the unit we see 
that the assignment $M\mapsto\ci(M)^\perp$ has the property 
$\ci(M\#N)^\perp=(\ci(M)\cap\ci(N))^\perp=
\ci(M)^\perp+\ci(N)^\perp$ hence induces a unital semigroup 
homomorphism $L:(\sm,\#)\rightarrow(L(\ch)\setminus\{\ch\},+)$. 
Thus $q$ factorizes as 
\[q:(\sm,\#)\xrightarrow{\hspace{0.1in}L\hspace{0.1in}}
(L(\ch)\setminus\{\ch\},+)\xrightarrow{\hspace{0.1in}
\dim_\fr\hspace{0.1in}}([0,1),\bullet)\]
consequently the unique plain dimension function on subspaces as it is, must 
in fact be an Abelian unital semigroup homomorphism too hence its properties 
impose constraints on its target structure. Recall the $q$-invariant when 
evaluated on a connected sum has been written as 
$q(M\#N)=q(M)\bullet q(N)$ expressing that it depends only 
on $q(M)$ and $q(N)$. Putting $\cv=\ci(M)^\perp$ and 
${\mathscr W}=\ci(N)^\perp$ and knowing that the dimension function behaves 
like $\dim_\R(\cv+{\mathscr W})=\dim_\fr\cv+\dim_\fr{\mathscr W}-
\dim_\fr(\cv\cap{\mathscr W})$ forces to write 
$q(M)\bullet q(N)=q(M)+q(N)-f\big(q(M),q(N)\big)$ 
and the unknown function is strongly determined by the geometric properties of 
intersection of subspaces. Namely we know that 
$f:[0,1)\times[0,1)\rightarrow[0,1)$ 
such that $f(s,t)=f(t,s)$ and $s\mapsto f(s,t)$ is linear, $f(0,t)=0$ 
and $\lim_{s\rightarrow 1}f(s,t)=t$. The function 
$f(s,t)=st$ solves these constraints. This forces the Abelian semigroup 
multiplication law on $[0,1)$ to look like 
$s\bullet t=s+t-st$ with $0\in[0,1)$ being the unit hence yielding the shape 
for $q(M\#N)$ moreover constraining $q(\s^4)=0$ as stated.

Finally, since the blow-up in the smooth category $(\sm,\#)$ is by definition 
given by $M':=M\#\overline{\C P^2}$ the result follows if we put 
$t:=q(\overline{\C P^2})=q(\C P^2)$ in the connected sum formula. 
\end{proof}

\noindent{\it Proof of Theorem \ref{fotetel2}}. Putting together 
Lemmata \ref{terbelilemma}, \ref{jones} and \ref{gammaformulak} the theorem 
follows.\hspace{2.5cm}$\square$
\vspace{0.05in}

\noindent {\it An excursus on the invariant $q$}. Before proceeding 
further let us take a closer look on the introduced smooth $4$-manifold 
invariant concerning its effective computability. The general experience 
is that the more sensitive an invariant is, the less computable it is. 
By techniques borrowed from $4$-manifold theory and the gluing principle above, 
the following non-trivial but quite insensitive behaviour of 
$q$ shows up in the finite fundamental group situation.

\begin{lemma}
Let $M'$ and $M''$ be closed connected orientable smooth $4$-manifolds 
having finite fundamental groups. If $M'$ and $M''$ are homeomorphic then 
$q(M')=q(M'')$. 

In fact if $M$ is as above then 
\[q(M)=1-\vert\pi_1(M)\vert(1-t)^{{\rm rk}(\pi_2(M))}\]
with $t=q(\C P^2)$ as before and the condition $0\leqq q(M)<1$ makes sure 
that $0<t<1$. 
\label{egyszeresen-osszefuggo}
\end{lemma}

\begin{proof} Assume first that both $M'$ and $M''$ are closed, 
connected, simply connected (hence orientable) and are homeomorphic. 
Then by {\it e.g.} \cite[Theorem 9.1.12]{gom-sti} there exists an 
integer $k\geqq 0$ such that $M'\#k(\C P^1\times\C P^1)\cong M''\#k(\C 
P^1\times\C P^1)$ yielding $q(M'\#k(\C P^1\times\C P^1))=q(M''\#k(\C 
P^1\times\C P^1))$. Introducing $s:=q(k(\C P^1\times\C P^1))$ and 
applying the gluing principle from Lemma \ref{gammaformulak} 
we obtain an equality $q(M')+s-q(M')s=q(M'')+s-q(M'')s$. Finally 
leveraging the invertability of the map $r\mapsto r+s-rs$ when $s<1$, we 
come up with $q(M')=q(M'')$.

Concerning the explicit formula in the simply connected realm first, 
with some $s\in [0,1)$ take the recursive sequence 
\[R_0(s):=0\:,\:R_1(s):=s\:,\:\dots\:,\: R_k(s):=s+R_{k-1}(s)-sR_{k-1}(s)
\:,\:\dots\] 
for all $k=0,1,2,\dots$ representing the semigroup $\{0\}\cup\N$ inside 
$[0,1)$ and put $t:=q(\C P^2)=q(\overline{\C P^2})$. Now 
if $M_1$ and $M_2$ are connected,
closed, simply connected, smooth then there exist integers $k_1,l_1\geqq 0$ and
$k_2,l_2\geqq 0$ such that $M_1\#k_1\C P^2\#l_1\overline{\C P^2}\cong
M_2\#k_2\C P^2\#l_2\overline{\C P^2}$ (cf. {\it e.g.} 
\cite[Theorem 9.1.14]{gom-sti}) which gives again that
$q(M_1\#k_1\C P^2\#l_1\overline{\C P^2})=
q(M_2\#k_2\C P^2\#l_2\overline{\C P^2})$. Then by the gluing principle
\[q(M_1)+R_{k_1+l_1}(t)-q(M_1)R_{k_1+l_1}(t)=
q(M_2)+R_{k_2+l_2}(t)-q(M_2)R_{k_2+l_2}(t)\:\:.\]
Let $M_1:=M$ be arbitrary and $M_2:=\s^4$ hence $q(M_2)=0$.
Then we can suppose that $k_1+l_1\leqq k_2+l_2$ therefore 
$q(M)+R_{k_1+l_1}(t)-q(M)R_{k_1+l_1}(t)=R_{k_2+l_2}(t)$ 
from which again, since $t<1$, by invertability we find
$q(M)=R_{k_2+l_2-k_1-l_1}(t)$ hence setting 
$n:=k_2+l_2-k_1-l_1\geqq0$ we get $q(M)=R_n(t)$. It is 
clear from the proof that here $n=b_2(M)$ i.e., $n={\rm rk}(\pi_2(M))$ by 
Hurewicz's theorem, moreover it is easy to see that 
$R_k(s)=1-(1-s)^k$ is the solution of the above recursion. 
Hence inserting $k:={\rm rk}(\pi_2(M))$ and 
$s:=t$ and knowing that $\vert\pi_1(M)\vert=1$ we end up with the stated 
formula for $q(M)$ in the simply connected situation.

Concerning the non-simply connected case, first recall the well-known 
fact that every finite group $G$ appears as the fundamental group of some 
closed, connected and orientable smooth $4$-manifold $M$ i.e., 
$\pi_1(M)\cong G$. Consider the universal covering $\widetilde{M}$ which 
is therefore remains closed, and is connected and simply connected. Then 
$\widetilde{M}/G\cong M$ consequently $\widetilde{M}$ is acted upon 
freely by $G$ and $M$ can be identified with an open subset of the 
universal covering space, namely one particular fundamental domain 
$F(M)\subset\widetilde{M}$ of this $G$-action on $\widetilde{M}$. First: 
this inclusion, by referring to (\ref{ideal}), implies that 
$I(\widetilde{M})\subset I(F(M))$ hence via (\ref{idef}) we also obtain an 
inclusion $\fii(\widetilde{M})\subset\fii(F(M))$ of von Neumann 
algebras. Second: on the one hand as $G$ permutes the fundamental 
domains, it acts via outer $\divideontimes$-automorphisms on 
$\fii(F(M))$ exhibiting a fixed-point subalgebra 
$\fii(F(M))^G\subset\fii (F(M))$; on the other hand the obvious 
inclusion $G\subset{\rm Diff}^+(\widetilde{M})$, compared with 
(\ref{ideal}) which says that $I(\widetilde{M})$ is an intersection of 
pointwise ideals permuted by ${\rm Diff}^+(\widetilde{M})$ hence 
$I(\widetilde{M})$ is pointwisely fixed by ${\rm Diff}^+(\widetilde{M})$ thus 
by (\ref{idef}) so is $\fii(\widetilde{M})$, 
gives $\fii(\widetilde{M})=\fii(F(M))^G$. Third: 
the existence of an orientation-preserving diffeomorphism between $F(M)$ 
and an open subset of $M$ having measure-zero complementum hence an 
isomorphism $I(F(M))\cong I(M)$, implies an isomorphism 
$\fii(F(M))\cong\fii(M)$ of von Neumann algebras and we already know 
that the latter is a subfactor of $\fr$ hence so is $\fii(F(M))$. Now 
putting together these three pieces of data and referring to \cite[Corollary 
19.1.6]{jon} we observe that $\fii(\widetilde{M})\subset\fii(F(M))$ is 
an inclusion of subfactors having index 
$[\fii(F(M)):\fii(\widetilde{M})]=\vert G\vert$. Consequently, by 
referring to Lemma \ref{jones}, more precisely to (\ref{formula}), we 
compute $q(\widetilde{M})=1-\frac{4}{[\fr:\fii(\widetilde{M})]} 
=1-\frac{4}{[\fr: 
\fii(F(M))][\fii(F(M)):\fii(\widetilde{M})]}=1-\frac{4}{[\fr:\fii(M)] 
\vert G\vert}$ and then multiply both sides with $1<\vert 
G\vert<+\infty$ and insert $q(M)=1-\frac{4}{[\fr:\fii(M)]}$ too. This way we 
find that $q(M)=1+\vert G\vert(q(\widetilde{M})-1)$. Finally plugging 
$G\cong\pi_1(M)$ as well as the already known expression 
$q(\widetilde{M})=1-(1-t)^{{\rm rk}(\pi_2(\widetilde{M}))}$ from the simply 
connected case and recalling that covering implies an 
isomorphism $\pi_2(\widetilde{M})\cong\pi_2(M)$ we come up with the 
desired formula in the non-simply connected case too. 
\end{proof}

\noindent As an important task next we compute 
the number $0<t<1$ apprearing in both Lemmata \ref{gammaformulak} and 
\ref{egyszeresen-osszefuggo} and is assigned to $\C P^2$. 

\begin{lemma} An equality $q(\C P^2)=\frac{5}{9}$ holds hence 
$t=\frac{5}{9}$. Thus $q(M)=1-\vert\pi_1(M)\vert\big(\frac{2}{3} 
\big)^{2\:{\rm rk}(\pi_2(M))}$ in Lemma \ref{egyszeresen-osszefuggo}. Moreover 
the condition $0\leqq q(M)<1$ gives rise to an inequality 
$\vert\pi_1(M)\vert\leqq\big(\frac{3}{2}\big)^{2\:{\rm rk}(\pi_2(M))}$ 
whenever $M$ is a closed, connected, orientable smooth $4$-manifold having 
finite fundamental group.
\label{tlemma}
\end{lemma}

\begin{proof} We have seen that the action 
$x\mapsto\Phi(x)$ of a diffeomorphism on a point $x\in M$ 
induces the unitary action $P_x\mapsto(\Phi^{-1})^*P_x\Phi^*$ on the 
corresponding projection $P_x\in\fr$ by (\ref{beagyazas}) hence this action is 
an inner $\divideontimes$-automorphism of $\fr$. In this way we obtain a 
commutative diagram
\[\xymatrix{\fr\ar[r]^{{\rm Ad}_{(\Phi^{-1})^*}} &
              \fr\\
          M\ar[u]^{i_M} \ar[r]^{\Phi}  & M\ar[u]^{i_M}}\]
which governs the inclusion of ${\rm Diff}^+(M)$ as a subgroup of inner 
$\divideontimes$-auto\-morphisms into ${\rm Aut}\:\fr$. It is clear that the 
corresponding subfactor $\fii(M)$ from Lemma \ref{jones} must be invariant 
within $\fr$ under diffeomorphisms {\it i.e.}, is preserved by 
${\rm Diff}^+(M)$ acting as inner $\divideontimes$-automorphisms on $\fr$. 
Assume furthermore that there exists a {\it compact} subgroup 
$G\subsetneqq{\rm Diff}^+(M)$ which acts transitively on $M$. Thus taking into 
account the decomposition in 
(\ref{ideal}) we can see by picking any point $x_0\in M$ that 
\[I(M)=\bigcap\limits_{x\in M}\fr(1-P_x)= \bigcap\limits_{\Phi\in 
{\rm Diff}^+(M)}\fr\big(1-P_{\Phi(x_0)}\big)= \bigcap\limits_{\Psi\in 
G}\fr\big(1-P_{\Psi(x_0)}\big)\] 
that is, the ideal already arises as taking intersection under the subgroup 
only. Moreover being an intersection, we have already observed that $I(M)$ in 
fact is pointwisely fixed by $G$ 
hence by (\ref{idef}) so is $\fii(M)$. Therefore an effective form of 
the invariance condition is that $\fii(M)$ belongs to the 
{\it fixed point subalgebra} $\fr^G$ of a {\it compact} group $G$ 
operating as inner $\divideontimes$-automorphisms on $\fr$, dictated by 
the diagram above.

Consider now the Fubini--Study metric $g$ on the complex projective 
plane $\C P^2$. It is a classical fact that this is a 
homogeneous Riemannian $4$-manifold satisfying the Einstein condition with 
a non-zero cosmological constant. 
The isometry group ${\rm Iso}^+(\C P^2,g)\subsetneqq{\rm Diff}^+(\C P^2)$ of 
the Fubini--Study metric is large enough to act transitively along $\C P^2$. 
A way to implement ${\rm Iso}^+(\C P^2,g)\cong{\rm PU}(3)$ into 
$\fb(\ch)$ to meet our demands is to take conjugations by $3\times 3$ 
unitary matrices on $\fr\cong\fm_3(\fr)\cong\fm_3(\fii(\C P^2))$. Consequently 
$[\fr:\fii(\C P^2)]=[\fm_3(\fii(\C P^2)):\fii(\C P^2)]=3^2=9$ demonstrating 
by Lemma \ref{jones} that 
$q(\C P^2)=1-\frac{4}{[\fr\::\:\fii(\C P^2)]}=1-\frac{4}{9}=\frac{5}{9}=t$ 
as stated.  
\end{proof}


\noindent We close our detour on $q$ as well as this section
with two general observations. The first is as follows: 
Lemma \ref{egyszeresen-osszefuggo} unfortunately makes sure that in the 
finite-fundamental-group regime at least, $q$ is not really sensitive 
because it depends only on two simple topological data. In fact 
$q$ gets less-and-less injective as the sizes of the $1^{\rm st}$ and 
the $2^{\rm nd}$ homotopy groups of a topological $4$-manifold carrying smooth 
structure(s) increases. On the contrary, 
as these parameters approach zero, $q$ has a sharp content: an application 
to $\s^4$ satisfying $q(\s^4)=0$ implies that if the smooth four dimensional 
Poincar\'e conjecture fails then $q$ is not injective at zero too. 
A stronger assertion is the following.

\begin{lemma} Consider $q$ as a restricted map from (the category 
of) the isomorphism classes of all connected, closed, orientable, 
smooth $4$-manifolds having finite fundamental group ${\sm_{fin}}$ to the 
semi-open interval $[0,1)$. Then $q$ is injective at zero if and only if the 
smooth Poincar\'e conjecture holds {\it i.e.}, the $4$-sphere admits a 
unique smooth structure. 
\label{poincare1} 
\end{lemma}

\begin{proof} If $S^4$ is a differentiable manifold 
homeomorphic to $\s^4$ then $S^4\in\sm_{fin}$ such that $\pi_1(S^4)=1$ and 
$\pi_2(S^4)=0$ therefore Lemma \ref{egyszeresen-osszefuggo} implies 
$q(S^4)=1-1\cdot(1-t)^0=0$. Assume 
that $q\vert_{\sm_{fin}}$ is injective at zero. Then $S^4\cong\s^4$ and the 4 
dimensional smooth Poincar\'e conjecture follows.

Conversely, take $S^4\in\sm_{fin}$ such that $q(S^4)=0$. Then by Lemmata 
\ref{egyszeresen-osszefuggo} and \ref{tlemma} we conclude that 
$q(S^4)=1-\vert\pi_1(S^4)\vert\big(\frac{4}{9}\big)^{{\rm rk}(\pi_2(S^4))}=0$ 
hence necessarily $\vert\pi_1(S^4)\vert=1$ {\it i.e.}, $\pi_1(S^4)=1$ and 
${\rm rk}(\pi_2(S^4))=0$. Hurewicz's theorem and the 
universal coefficient formula together imply that $H_2(S^4;\Z)=0$ {\it i.e.}, 
$S^4$ has trivial intersection form. The 
validity of the 4 dimensional topological Poincar\'e conjecture 
\cite{fre} thus makes sure that $S^4$ is homeomorphic to $\s^4$. Assume that 
the smooth 4 dimensional Poincar\'e conjecture is true. Then $S^4\cong\s^4$ 
hence $q\vert_{\sm_{fin}}$ is injective at zero.
\end{proof} 

\noindent The second general observation is that in most of the cases 
$q$ fails to be injective.

\begin{lemma} Assume that the image of $q$ as a restricted map from 
$\sm_{fin}$ as above into the semi-open real interval $[0,1)$ is everywhere 
dense. Then $q$ is nowhere injective and in fact the cardinalities of its 
preimage sets are always infinite. 
\label{poincare2} 
\end{lemma}

\begin{proof} For $s\in[0,1)$ let 
$\vert q^{-1}(s)\vert=0,1,2,\dots,+\infty$ denote the cardinality of the 
preimage set of $q$ at the point $s$ {\it i.e.}, the collection 
of all pairwise non-diffeomorphic $4$-manifolds having $q$-invariants equal 
to $s$. Here $\vert q^{-1}(s)\vert=0$ means $q^{-1}(s)=\emptyset$ {\it i.e.}, 
$s$ is not in the range of $q$; while the other extreme case 
$\vert q^{-1}(s)\vert=+\infty$ means there exist infinitely many 
pairwise non-diffeomorphic smooth $4$-manifolds, perhaps homeomorphic or not, 
having $q$-invariants all equal to $s$. Consider the subset 
$S:=\{s\in[0,1)\:\vert\:\vert q^{-1}(s)\vert>1\}
\subseteqq{\rm im}\:q\subseteqq[0,1)$. We assert that $S={\rm im}\:q$. The 
proof is based on a modified open-closed argument adapted to the possibility 
that ${\rm im}\:q$ is totally disconnected in $[0,1)$.

We observe first that $S$ is not empty. To be explicit let $M$ be a fixed $K3$ 
surface; it is known for a long time \cite[Corollary 3.3.7]{gom-sti} 
that the simply connected topological $4$-manifold underlying $K3$ carries 
countably infinitely many different smooth structures. Consequently 
referring to Lemma \ref{egyszeresen-osszefuggo} and 
putting $s:=1-\big(\frac{4}{9}\big)^{b_2(K3)}=1-\big(\frac{4}{9}\big)^{22}$ 
we know that $\vert q^{-1}(s)\vert=+\infty$ yielding $s\in S$. 

Next we prove that $S$ is open in the following sense: for every inner 
point $s\in (0,1)\cap S$ there exists $\varepsilon >0$ such that $t\in 
S$ for every $t\in(s-\varepsilon, s+\varepsilon)\cap{\rm im}\:q$ 
moreover $s\pm\varepsilon\in{\rm im}\:q$. First: consider the projection 
$P_M\in\fr$ having been defined by $I(M)=\fr(1-P_M)$ in the proof of 
Lemma \ref{jones}. Then there is an obvious equality 
$(1-P_M)\fr(1-P_M)=(1-P_M)\fii(M)(1-P_M)$ thus by \cite[Proposition 
19.2.3]{jon} again, we compute 
$1=[(1-P_M)\fr(1-P_M):(1-P_M)\fii(M)(1-P_M)]=\tau(1-P_M) 
\tau'(1-P_M)[\fr:\fii(M)]$ where $\tau'$ is the unique trace on 
$\fii(M)'$. Combining with (\ref{formula}) gives 
$q(M)=1-4\tau(1-P_M)\tau'(1-P_M)$. Second: put the ultraweak topology on 
$\fr$. On the one hand the traces 
$\tau,\tau':\fr\cap\fii(M)'\rightarrow\C$ are then continuous. On the 
other hand let us topologize $\sm_{fin}$ by pulling back the ultraweak 
topology from $\fr$ by the aid of the map $M\mapsto P_M$. This map is 
therefore continuous by construction hence $M\mapsto 
q(M)$ is also a continuous function from $\sm_{fin}$ into $[0,1)$. Therefore 
the assignment $M\mapsto q(M)$ looks like a branched covering over ${\rm 
im}\:q$ thus $s\mapsto\vert q^{-1}(s)\vert$ is lower semicontinuous: for 
every inner point $s\in(0,1)\cap{\rm im}\:q$ there exists 
$\varepsilon>0$ such that $\vert q^{-1}(t)\vert\geqq\vert 
q^{-1}(s)\vert$ whenever $t\in(s-\varepsilon,s+\varepsilon)\cap{\rm 
im}\:q$. Now applying this to any $s\in (0,1)\cap S$ we get $t\in S$ for 
every $t\in(s-\varepsilon,s+\varepsilon)\cap{\rm im}\:q$ and notice that 
such $t$'s always exist since by assumption ${\rm im}\:q$ is everywhere 
dense within $[0,1)$. Moreover passing to a smaller $\varepsilon>0$ if 
necessary, we can suppose in addition that the endpoints satisfy 
$s\pm\varepsilon\in{\rm im}\:q$. Hence $S$ is open within ${\rm im}\:q$ 
and note that the complementum of this type of open interval is closed 
within ${\rm im}\:q$.

Finally we prove closedness in the usual sense: 
for every sequence $\{s_i\}_{i=1,2,\dots}$ within $S$ the 
convergence to a point $t\in{\rm im}\:q$ implies $t\in S$. First: take a 
fixed object $M\in\sm_{fin}$. It follows from 
the strong$^\divideontimes$-limit-decom\-position (\ref{automorfizmus}) of an 
automorphism of $\fr$ with respect to $M$, together with the 
strong$^\divideontimes$-closedness of the embedding (\ref{beagyazas}) of 
$M$ into $\fr$, that $M$ can be identified with the orbit of any of its 
elementary projection $P_x\in\fr$ under ${\rm Aut}\:\fr$. Consequently if 
$M,N\in\sm_{fin}$ are not diffeomorphic then their corresponding embeddings 
(\ref{beagyazas}) are disjoint within $\fr$. In fact if 
$x\in M$ and $y\in N$ then their $P_x,P_y\in\fr$ via (\ref{beagyazas}) 
are not unitary equivalent; hence so are their ideals 
$I(M),I(N)\subset\fr$ given by taking intersections as in (\ref{ideal}); 
then their projections $P_M,P_N\in\fr$ given by $I(M)=\fr(1-P_M)$ {\it etc.}, 
are also not unitary equivalent. Therefore $[[P_M-P_N]]=1$ where 
$[[\:\cdot\:]]$ is the norm on $\fr$. Second: let $\{s_i\}_{i=1,2,\dots}$ be 
a sequence in $S$ converging to a point 
$t\in{\rm im}\:q$. Then for every $i=1,2,\dots$ we know 
that $\vert q^{-1}(s_i)\vert>1$ therefore there exist $M_i,N_i\in\sm_{fin}$ 
satisfying $q(M_i)=s_i=q(N_i)$ such that $M_i\not\cong N_i$ in $\sm_{fin}$ 
hence $[[P_{M_i}-P_{N_i}]]=1$ in $\fr$. Note that this latter property 
implies that although $P_{M_i}, P_{N_i}$ are non-trivial, 
$P_{M_i}P_{N_i}=0=P_{N_i}P_{M_i}$. Taking the limit $i\rightarrow+\infty$ 
by continuity of $q$ there exist $M,N\in\sm_{fin}$ such that $q(M)=t=q(N)$ and 
obviously $[[P_M-P_N]]\leqq 1$. However these projections are non-trivial 
and by the constructed ultraweak continuity of the assignment $M\mapsto P_M$ 
(in which multiplication is continuous) they yet satisfy $P_MP_N=0=P_NP_M$ 
hence in fact $[[P_M-P_N]]=1$. Therefore 
$P_M\not=P_N$ yielding $M\not\cong N$. Consequently $\vert q^{-1}(t)\vert>1$ 
continues to hold {\it i.e.}, $t\in S$. Thus $S$ is closed within 
${\rm im}\:q$ as well. Our overall conclusion is therefore 
that $S={\rm im}\:q$ hence the assertion follows. 

We also conclude that the cardinality of $q^{-1}(s)$ for every 
$s\in{\rm im}\:q\subseteqq[0,1)$ is infinite. 
\end{proof}


\section{The Einstein equation and comparison of representations}
\label{three}


In the previous section making use of the smooth structure and orientation 
of a connected compact $4$-manifold alone, we have constructed a subfactor 
$0\not=\fii(M)\subsetneqq\fr$ encoding some information about topology and 
perhaps smoothness. In this section we turn the coin from the mathematical 
to its physical side and using additional geometric data we shall improve 
the subfactor to a normal subalgebra 
$0\not=\fii(M)\subset\fii(M,g)\subseteqq\fr$, which precisely in $4$ 
dimensions is related with an operator algebraic 
characterization of the Riemannian vacuum Einstein 
equation ${\rm Ric}=\Lambda g$.

{\it Construction of a canonical metric}. 
As a first step, let us observe that the plain manifold embedding 
$M\subset\fr$ in (\ref{beagyazas}) naturally enhances to a Riemannian 
embedding $(M,g)\subset(\fr, \re\tau)$. Consider the setup in Theorem 
\ref{fotetel1} again {\it i.e.}, take a connected compact oriented smooth 
$4$-manifold $M$ and consider its embedding into $\fr$ via (\ref{beagyazas}) 
mapping $x\in M$ to $P_x\in\fr$. Fix a point $x\in M$ and a tangent vector 
$X\in T_xM$. Take a $1$-parameter subgroup 
$\{\Phi_t\}_{t\in(-\varepsilon,+\varepsilon)}$ of diffeomorphisms such that 
the curve $t\mapsto\Phi_t(x)$ in $M$ is smooth and satisfies 
$\Phi_0(x)=x\in M$ and $\dot{\Phi}_0(x)=X\in T_xM$. 
Also consider the image curve $t\mapsto P(t):=i_M\Phi_t(x)$ in $\fr$. 
Hence $P(0)=P_x=i_Mx=i_M\Phi_0(x)\in\fr$ thus formally 
$\dot{P}(0)=\frac{\dd}{\dd t}i_M\Phi_t(x)\vert_{t=0}=
i_{M*}\dot{\Phi}_0(x)=i_{M*}X\in T_{P_x}\fr$. Moreover recall that 
$P(t)=\Phi^*_{-t}P_x\Phi^*_t$ hence also formally $\dot{P}(0)=[P_x,L_X]$, 
where $L_X\in\ed(\Omega^2(M;\C))$ is the Lie derivative constructed from
the infinitesimal generator, also denoted as $X\in C^\infty(M;TM)$, of
$\{\Phi_t\}_{t\in(-\varepsilon,+\varepsilon)}$. If 
$\{\Psi_t\}_{t\in(-\varepsilon,+\varepsilon)}$ is a similar 
family with corresponding projector curve $Q(t)=i_M\Psi_t(x)$ then we can
pick an intertwining diffeomorphism $\Lambda$ having the property 
$\Lambda(\Phi_t(x))=\Psi_t(x)$ for every $t$ 
hence $Q(t)=(\Lambda^{-1})^*P(t)\Lambda^*$ however such that
$(\Lambda^{-1})^*P(0)\Lambda^*=P(0)$ and $(\Lambda^{-1})^*\dot{P}(0)
\Lambda^*=\dot{P}(0)$. Consequently we compute that 
$\dot{Q}(0)=\frac{\dd}{\dd t}((\Lambda^{-1})^*P(t)
\Lambda^*)\vert_{t=0}=(\Lambda^{-1})^*\dot{P}(0)\Lambda^*=\dot{P}(0)$. 
Therefore the formal derivative of the embedding (\ref{beagyazas}) defined by 
$i_{M*}X:=[P_x,L_X]$ for every $x\in M$ and $X\in T_xM$, if exists, is 
well-defined {\it i.e.}, is independent of how $X\in T_xM$ has been extended 
to an $X\in C^\infty(M;TM)$. We know that always $P_x\not=0$ and 
assume that $X\not=0$ hence $L_X\not=0$ however
$[P_x,L_X]=0$ {\it i.e.}, the formal commutator is degenerate in this sense. 
This would imply that the image of $P_x$ in $\Omega^2(M;\C)$
is invariant under $L_X$; however this is not possible, for if 
$\varphi\in\Omega^2(M,x;\C)$ is any $2$-form vanishing at $x\in M$
then in general $L_X\varphi$ does not vanish there hence is not in
$\Omega^2(M,x;\C)$. Last but not least concerning existence, we know already 
that the $\C$-linear $\divideontimes$-automorphisms of $\fr$ in question 
{\it i.e.}, the $1$-parameter unitary group $t\mapsto{\rm Ad}_{\Phi_{-t}^*}$ is 
strongly continuous, hence by Stone's theorem its infinitesimal generator 
$L_X$, operating so far on $\Omega^2(M;\C)$ only, 
gives rise to a densly defined symmetric operator on $\ch$ too. Being a 
projection, it follows at once that $P_x\in\fr$ preserves the domain of $L_X$ 
hence the commutator $[P_x,L_X]$, as a densly defined antisymmetric operator on 
$\ch$, is meaningful. All of these allow us to use the non-degenerate 
scalar product $(A,B)=\tau(AB^\divideontimes)$ on $\fr$ 
(completing it to $\ch$ as above) to obtain a definite non-degenerate 
symmetric tensor $g$ on $M$ via pullback that is, 
put $g(X,Y):={\rm Re}(i_{M*}X,i_{M*}Y)$ yielding
\[g_x(X,Y):={\rm Re}\:\tau\big([P_x,L_X][P_x,L_Y]^\divideontimes\big)\]
for $X,Y\in T_xM$ (extended somehow to $X,Y\in C^\infty(M;TM)$ as above) 
and $x\in M$. This symmetric positive definite tensor field $g$ along $M$ is 
therefore well-defined, non-degenerate and exists  
hence gives rise to a Riemannian $4$-manifold $(M,g)$. Finally we remark that 
(\ref{beagyazas}) in its improved version 
$(M,g)\subset(\fr,\re\tau)$ is analogous to embedding Riemannian manifolds 
into Hilbert spaces via heat kernel techniques \cite{ber-bes-gal}.

{\it The Einstein condition}. Consider the 
complexified Hodge operator $*\in C^\infty(M;{\rm End}
(\wedge^2T^*M\otimes_\R\C))$ acting on $2$-forms along $(M,g)$. It readily 
follows from the left hand side inclusion in (\ref{izomorfizmus}) together 
with $C(M)\subset\fr$ that $*\in\fr$. This simple algebraic operator 
generates a dynamics on $\fr$ as follows.

\begin{definition} Let $M$ be a connected compact oriented smooth $4$-manifold 
with its induced Riemannian metric $g$ and corresponding Hodge operator 
$*\in\fr$ as above. That $*\not=1$ is self-adjoint and 
satisfies $*^2=1$ implies that there exists a basis in $\fr$ in which
$*=\left(\begin{smallmatrix}
                    1&0\\
                    0&-1
         \end{smallmatrix}\right)$. Therefore introducing the skew-Hermitian
operator $\log *:=\left(\begin{smallmatrix}
                    0&0\\
                    0&\sqrt{-1}\pi
             \end{smallmatrix}\right)$ for every $t\in\R$ we can set
$*^t:={\rm e}^{t\log*}$ which is a unitary in $\fr$. The corresponding
$1$-parameter family of $\C$-linear inner $\divideontimes$-automorphisms on
$\fr$ given by
\[A\longmapsto *^tA*^{-t}\]
for all $A\in\fr$ and $t\in\R$ introduces a non-trivial periodic dynamics
on $\fr$ what we call the {\rm Hodge dynamics}. Accordingly
$\big(\fr, \{{\rm Ad}_{*^t}\}_{t\in\R}\big)$ is a {\rm Hodge dynamical system 
on the hyperfinite ${\rm II}_1$ factor}.
\label{hodge}
\end{definition}

\noindent This naturally appearing dynamical system on $\fr$ 
can be used to characterize not only the plain 
embedding $M\subset\fr$ but even the geometric properties of the 
Riemannian embedding $(M,g)\subset(\fr,\re\tau)$ using the terminology 
of dynamical systems. The Hodge star as an element $*\in\fr$ is 
self-adjoint, satisfies $1\not=*$ but $1=*^2$ hence generates 
an Abelian von Neumann subalgebra $\langle*\rangle\subset\fr$ isomorphic 
to $\C1\oplus\C*$; consequently its relative commutant 
$\fii(M,g):=\langle*\rangle'\cap\fr$ extends $\langle*\rangle$ to a 
by construction normal subalgebra of $\fr$ ({\it i.e.}, is equal to its double 
relative commutant). It can be identified with the {\it fixed-point-subalgebra} 
of the Hodge dynamics: $\fii(M,g)=\big\{A\in\fr\:\vert\: {\rm Ad}_{*^t}A=A\:
\mbox{for every $t\in\R$}\big\}$. Conversely, every normal subalgebra of $\fr$ 
arises as the fixed-point-subalgebra of a periodic inner 
$\C$-linear $\divideontimes$-automorphism of $\fr$, cf. 
\cite[Theorem 3.1]{sto2}.

\begin{lemma}
Let $M$ be a connected compact oriented smooth $4$-manifold and consider its
embedding (\ref{beagyazas}) into $\fr$. Take 
the induced Riemannian metric $g$ on $M$ and Hodge dynamics 
$\{{\rm Ad}_{*^t}\}_{t\in\R}$ on $\fr$ as well as the subfactor 
$\fii(M)\subset\fr$ as in (\ref{idef}) and the Riemannian curvature 
tensor (\ref{gorbulet}) satisfying $R_g\in\fr$. 

The projections of the embedding satisfy $P_x\in\fii(M,g)$ for 
every $x\in M$ {\it i.e.}, $M\subset\fr$ is pointwisely preserved by the Hodge 
dynamics; moreover $\fii(M)\subset\fii(M,g)$ 
{\it i.e.}, the subfactor generated by $M$ is also pointwisely fixed 
by the Hodge dynamics. Finally $R_g\in\fii(M,g)$ if and only 
if $(M,g)$ is Einstein.
\label{einstein}
\end{lemma}

\begin{proof} The complexified Hodge star is a gauge transformation of 
$\wedge^2T^*M\otimes_\R\C$ hence we know from the general theory summarized 
just above (\ref{automorfizmus}) that it leaves $M$ pointwise fixed. 
Consequently the corresponding $\C$-linear $\divideontimes$-automorphism of 
$\fr$ satisfies $*P_x*=P_x$ for the elementary projections $P_x\in\fr$ 
assigned to a point $x\in M$ via (\ref{beagyazas}) that is, $P_x\in\fii(M,g)$. 
Moreover pick $A\in\fii(M)$ and using the decomposition 
(\ref{idef}) write it in the form 
$A=B^\divideontimes C$ where $B,C\in I(M)$. Then by (\ref{ideal}) this 
condition is equivalent to $BP_x=0$ for every $x\in M$ consequently 
$0=*^t(BP_x)*^{-t}=(*^tB*^{-t})(*^tP_x*^{-t})=(*^tB*^{-t})P_x$ for every 
$x\in M$, thus $*^tB*^{-t}\in I(M)$ too and likewise for $C\in I(M)$. 
Therefore $*^t(\fii (M))*^{-t}=\fii(M)$ that is, $\fii(M)$ is invariant under 
the Hodge dynamics. However we can say more. Combining 
$*P_x*=P_x$ with $*^2=1$ we get $*P_x=P_x*$ hence the two operators commute. 
In this context $B\in I(M)$ {\it i.e.}, $BP_x=0$ implying 
$P_xB^\divideontimes=0$, 
can be interpreted as $B^\divideontimes$ is an eigenvector of $P_x$ with 
zero eigenvalue; hence $B^\divideontimes$ is also an eigenvector for $*$ with 
possible eigenvalues $\pm 1$. The same holds for 
$C^\divideontimes$ moreover $B^\divideontimes+C^\divideontimes$ too, hence 
both $B^\divideontimes$ and $C^\divideontimes$ are eigenvectors with the same 
eigenvalue. Thus taking $\fii(M)\ni A=B^\divideontimes C\in I(M)^\divideontimes 
I(M)$ we compute $*^tA*^{-t}=*^t(B^\divideontimes C)*^{-t}=
(*^tB^\divideontimes)(*^tC^\divideontimes)^\divideontimes=
(\pm 1)^tB^\divideontimes((\pm 1)^tC^\divideontimes)^\divideontimes=
B^\divideontimes C=A$. We conclude that $\fii(M)$ is in fact pointwisely 
fixed by the Hodge dynamics hence $\fii(M)\subset\fii(M,g)$ too. 
Finally the assertion on the Einstein condition is straightforward, as 
following \cite{sin-tho} we notice at once by comparing the curvature $R_g$ in 
(\ref{gorbulet}) and 
\[*=\left(\begin{matrix}
                1 & 0\\
                0 & -1
          \end{matrix}\right)
\:\:\:\:\::\:\:\:\:\:\begin{matrix}\Omega^+(M;\C)\\
                                                  \bigoplus\\
                                                 \Omega^-(M;\C)
                                            \end{matrix}
         \:\:\:\longrightarrow\:\:\:\begin{matrix}\Omega^+(M;\C)\\
                                                  \bigoplus\\
                                                 \Omega^-(M;\C)
                                            \end{matrix}\]
that $g$ is Einstein {\it i.e.}, the traceless Ricci part of $R_g$ in 
(\ref{gorbulet}) vanishes if and only if $*$ commutes with $R_g$ 
which together with $*^2=1$ implies that $*R_g*=R_g$ which means that 
$R_g\in\fii(M,g)$. 
\end{proof}

\noindent{\it Proof of Theorem \ref{fotetel3}}. Taking into account the 
construction of the metric $g$ above, Definition \ref{hodge} and Lemma 
\ref{einstein} the theorem follows.\hspace{11.5cm}$\square$ 
\vspace{0.01in}

\noindent {\it Comparison of representations and physics}. Lemma 
\ref{einstein} can be regarded as a sort of compatibility result between 
two pieces of data on the hyperfinite ${\rm II}_1$ factor namely a 
subfactor $0\not=\fii(M)\subsetneqq\fr$ for $M$ and a normal subalgebra 
$0\not=\fii(M,g)\subseteqq\fr$ for $(M,g)$ in the sense that 
$\fii(M)\subset\fii(M,g)$ and additionally 
$R_g\in\fii(M,g)$ if and only if $(M,g)$ is 
Einstein. A different, somewhat more physical way of 
understanding compatibility between these two structures is as follows. 
A connected, compact, oriented, smooth $4$-manifold $M$ also gives rise 
to a representation $\rho_M$ of $\fr$ as in Lemma \ref{terbelilemma} and 
a Hodge dynamical system $\big(\fr, \{{\rm Ad}_{*^t}\}_{t\in\R}\big)$ 
introduced in Definition \ref{hodge}. The mathematical fact that $\fr$ 
admits many {\it inequivalent} representations ({\it i.e.}, the failure of the 
Stone--von Neumann representation theorem in this case) can be 
interpreted in the framework of {\it algebraic quantum field theory} 
\cite{bra-rob,haa} as saying that $\fr$ is an operator algebra of a quantum 
system possessing infinitely many degrees of freedom like a quantum 
statistical ensemble. In this context, as well as 
recalling \cite{con-rov}, the temptation here to interpret $\fr$ as the 
operator algebra of a relativistic quantum field theory at non-zero 
temperature involving gravity, is supported by the following further 
observations (also cf. \cite{cha-lon-pen-wit}). On the one 
hand $\fr$ contains curvature tensors, the key objects of general 
relativity. On the other hand the periodicity of the Hodge dynamics on 
this operator algebra {\it i.e.}, the plain mathematical property $*^2=1$ of 
the Hodge star on $2$-forms in $4$ dimensions, can also be interpreted 
along these lines in the well-known way, as the presence of a 
temperature in a statistical ensemble (cf. {\it e.g.} \cite{itz-zub}). This 
temperature is the inverse of the period hence is equal to 
$\frac{1}{2}T_{\rm Planck}$ in natural units. This permits one to analyze 
the interference between the aforementioned structures 
within the realm of the theory of thermal equilibrium states in 
algebraic quantum field theory. For stationarity and 
stability are expected properties of thermal equilibrium states 
(see \cite[Section 5.4]{bra-rob}, \cite[Section V.3]{haa}), as a first 
step in this analysis we record here the following stationarity and stability 
property of the representations, more precisely their corresponding states on 
$\fr$, against their induced Hodge dynamics and their perturbations on $\fr$.

\begin{lemma} Consider the embedding $M\subset\fr$ by (\ref{beagyazas}) 
more precisely the induced oriented Riemannian 
$4$-manifold $(M,g)\subset(\fr,{\rm Re}\tau)$ as above. Also consider the 
associated Hodge dynamical system $\big(\fr,\{{\rm Ad}_{*^t}\}_{t\in\R}\big)$ 
as in Definition \ref{hodge}.

Then the state $F_{M,g}:\fr\rightarrow\C$ in the proof of Lemma 
\ref{terbelilemma} provided by $(M,g)$ is stationary under the Hodge dynamics 
{\it i.e.}, $F_{M,g}(*^tA*^{-t})= F_{M,g}(A)$ for every $A\in\fr$ and $t\in\R$.

Moreover take a unitary element $*'\in\fr$ and let
$\big(\fr,\{{\rm Ad}_{(*')^t}\}_{t\in\R}\big)$ be a ``nearby''
dynamical system in the sense that it preserves $M\subset\fr$ and satisfies
$(*')^p=1$ {\it i.e.}, is periodic with $1\leqq p<+\infty$. Then there exists
a corresponding ``nearby'' state $F_{M,g'}$ on $\fr$, yet inducing the same
representation $\rho_M$ of $\fr$, which is stationary under the ``nearby''
dynamics.
\label{allapot}
\end{lemma}

\begin{proof} Recall that $F_{M,g}:\fr\rightarrow\C$ has been defined in
Lemma \ref{terbelilemma} as $F_{M,g}(A)=\int_Mi_M^*\tau(AP_x)\mu_g$. Taking
into account that $*$ commutes with $P_x$ as in Lemma \ref{einstein} 
and using the cyclic property of the trace it follows at once that 
$F_{M,g}(*^tA*^{-t})=F_{M,g}(A)$ for every $A\in\fr$ and $t\in\R$.

The ``perturbed'' dynamics generated by $*'$ as a $1$-parameter inner 
$\divideontimes$-automorphisms of $\fr$ preserves $M$ by assumption 
hence admits a unique decomposition into a $1$-parameter family of gauge 
transformations and diffeomorphisms according to (\ref{automorfizmus}) hence 
${\rm Ad}_{(*')^t}={\rm Ad}_{\gamma_{\:t}}{\rm Ad}_{\Phi_{-t}^*}$. Being the
''perturbed'' dynamics periodic along $M$ its orbits are compact
consequently there exists a $*'$-averaged metric $g'$ on $M$ whose volume
form $\mu_{g'}$ is preserved by the perturbed dynamics. Thus 
\begin{eqnarray}
F_{M,g'}\big((*')^tA(*')^{-t}\big)&=&
\int\limits_{x\in M}i_M^*\tau\big(({\rm Ad}_{(*')^t}A)P_x\big)\mu_{g'}(x)=
\int\limits_{x\in M}i_M^*\tau\big(A({\rm Ad}_{(*')^{-t}}P_x)\big)\mu_{g'}(x)
\nonumber\\
&=&\int\limits_{x\in M}i_M^*\tau\big(A({\rm Ad}_{\gamma_{-t}}
{\rm Ad}_{\Phi_t^*}P_x)\big)\mu_{g'}(x)=
\int\limits_{x\in M}i_M^*\tau\big(AP_{\Phi_{-t}(x)}\big)\mu_{g'}(x)\nonumber\\
&=&\int\limits_{\Phi_{-t}(x)\in M}i^*_M\tau\big(AP_{\Phi_{-t}(x)}\big)
\mu_{g'}(\Phi_{-t}(x))\nonumber\\
&=&F_{M,g'}(A)\nonumber
\end{eqnarray}
demonstrating that $F_{M,g'}$ is stationary under the ''perturbed'' dynamics. 
Finally we have observed already in the proof of Lemma \ref{terbelilemma} 
that the ideal $0\not=I(M)\subsetneqq\fr$ in (\ref{ideal}) 
consisting of elements satisfying $F_{M,g'}(A^\divideontimes A)=0$ is 
independent of the metric $g'$ hence the representation of $\fr$ induced by 
$F_{M,g'}$ coincides with that one induced by $F_{M,g}$ {\it i.e.}, with the 
representation $\rho_M$ of $\fr$ constructed in 
Lemma \ref{terbelilemma} hence the result.
\end{proof}

\noindent Following \cite[Chapter 8]{ana-pop} appropriately finite 
representations of the hyperfinite ${\rm II}_1$ factor are classified by 
their Murray--von Neumann coupling constants or $\fr$-dimensions, taking 
all possible values in the real half-line $[0,+\infty)$, see 
\cite[Proposition 8.6.1]{ana-pop}. A representation, uniquely 
characterized by its $\fr$-dimension $y\in[0,+\infty)$ as its numerical 
invariant, naturally decomposes according to $y=[y]+\{y\}$ {\it i.e.}, splits 
into its integer part with $\fr$-dimension 
$[y]\in\{0,1,2,\dots\}\subset[0,+\infty)$ containing copies of the 
representation having $\fr$-dimension precisely $1$ and into its 
fractional part given by $\{y\}\in(0,1)\subset[0,+\infty)$ describing 
another representation whose $\fr$-dimension falls within the open unit 
interval. Consequently it is enough to understand those representations 
which belong to the closed unit interval $[0,1]\subset[0,+\infty)$ only. 
Of course the representation characterized by $0\in[0,1]$ is just the 
{\it trivial representation}. The representation having $\fr$-dimension 
precisely $1\in[0,1]$ is the {\it standard representation} $\pi$ of 
$\fr$ on itself by (left-)multiplications (as above). This is the 
best-known non-trivial representation possessing the following 
remarkable properties:

\begin{itemize}

\item[(i)] within the framework of the {\it Gelfand--Naimark--Segal} 
(GNS) construction, the unique standard representation $\pi$ of $\fr$ on 
a Hilbert space $\ch$ can be obtained from a distinguished faithful 
state, namely the unique finite trace $\tau$ on $\fr$ (see \cite[Section 
2.6]{ana-pop});

\item[(ii)] the {\it Tomita--Takesaki} modular theory is applicable to $\pi$
and the corresponding modular operator $\Delta$ renders $\fr$
a dynamical system, however this modular dynamics is trivial because
$\tau$ is tracial (see \cite[p. 90, Subsection 2.5.2]{bra-rob});

\item[(iii)] the state $\tau$ is a {\it Kubo--Martin--Schwinger} (KMS) 
state on $\fr$ with respect to the modular dynamics, however in a 
trivial way and the formal KMS temperature of this state is infinite, 
both because $\tau$ is tracial (cf. \cite[Section 5.3]{bra-rob}, 
\cite[Section V.2]{haa}).

\end{itemize}

\noindent In Section \ref{two} using smooth $4$-manifolds $M$ 
we have constructed an immense class of geometric representations 
$\rho_M$ whose $\fr$-dimensions $q(M)$ fall into $[0,1)$. What about 
properties (i)-(iii) concerning these fractional representations? 
Interestingly, we can can exhibit a list of analogous properties:

\begin{itemize}

\item[(iv)] within the GNS construction every connected compact
oriented smooth $4$-manifold $M$ gives rise to a
representation $\rho_M$ of $\fr$ on a Hilbert space $\ci(M)^\perp$ obtained
from a non-faithful state $F_{M,g}$ on $\fr$ (cf. Lemma \ref{terbelilemma});

\item[(v)] to every $\rho_M$ as above there exists a unitary operator
$*\in\fr$ which renders $\fr$ a dynamical system such that this 
Hodge dynamics is already non-trivial nevertheless always satisfies $*^2=1$ 
(cf. Definition \ref{hodge});

\item[(vi)] the state $F_{M,g}$ is invariant under, and the 
corresponding representation $\rho_M$ is stable against small 
perturbations of, the Hodge dynamics hence $F_{M,g}$ describes a thermal 
equilibrium state with respect to this periodic dynamics at a uniform 
formal temperature $\frac{1}{2}T_{\rm Planck}$ (cf. Lemma \ref{allapot} 
and the discussion before it). 
\end{itemize}

\noindent This comprehensive view of representations strongly motivates 
the following physical picture: there exists a unique physical system 
whose operator algebra is $\fr$ but this system exhibits different 
physical phases corresponding to inequivalent representations of $\fr$. 
Therefore, as a working hypothesis, it is challenging to physically 
interpret the quite circular interaction between $\fr$ and $M$ unfolded 
here as follows: the unique abstract triple $(\fr,\ch,\pi)$ 
describes the {\it quantum phase} of, while a highly non-unique triple 
$(\fr,\ci(M)^\perp,\rho_M)$ corresponding to a $4$ dimensional Riemannian 
manifold $(M,g)$, if Einstein, describes a particular state from the usual {\it 
classical phase} of Riemannian vacuum general relativity precisely in 
$4$ dimensions. (A non-vacuum but usual choice for $(M,g)$ is the FLRW 
solution with or without cosmological constant.) We can display symbolically 
this passage as 
\vspace{0.1in} 
\[\begin{matrix} 
(\fr,\ch,\pi)&\Longrightarrow & \big(\fii(M,g)\subset\fr\:,\: 
\ci(M)^\perp\subset\ch\:,\: \rho_M=\pi\vert_{\ci(M)^\perp}\big)\\ 
\hspace{-1cm}\Updownarrow & &\Updownarrow\\ 
\mbox{The ``quantum 
space-time''} & &\mbox{A particular $4$ dimensional Riemannian vacuum 
space-time $(M,g)$}\\ 
\mbox{at infinite temperature} &&
\mbox{at temperature $\frac{1}{2}T_{\rm Planck}$} 
\end{matrix}\] 
\vspace{0.1in}

\noindent having in mind a sort of phase transition 
from the quantum to the classical phase of the theory via spontaneous 
symmetry breaking by cooling (or by a spontaneous jump from the 
unique Tomita--Takesaki to a particular Hodge dynamics on $\fr$). Note 
that this transition from the unique quantum regime $(\fr,\ch,\pi)$ to a 
particular $4$ dimensional classical vacuum regime $(M,g)$ given by 
another representation $(\fr,\ci(M)^\perp,\rho_M)$ has been captured in 
the framework of algebraic quantum field theory 
\cite[Subsection 4.3.4]{bra-rob}, \cite[Subsection V.1.5]{haa} as 
switching from the unique representation $\pi$ to a different particular 
representation $\rho_M$ of the same algebra $\fr$. One can also formally 
label the transition with $\frac{1}{2}T_{\rm Planck}\approx7.06\times 
10^{31}$ K which is the formal temperature associated with $\rho_M$; 
this high temperature is reasonable if we keep in mind that $\pi$ 
corresponds to infinite temperature. Finally observe that during this 
spontaneous symmetry breaking procedure the original gauge group 
${\rm U}(\ch)\cap\fr\subset{\rm Aut}\:\fr$ breaks down to its subgroup 
${\rm Diff}^+(M)$ justifying the terminology.

\end{document}